\documentclass[12pt,letterpaper]{article}
\usepackage[T1]{fontenc}
\usepackage{lmodern}
\usepackage[margin=1in]{geometry}
\usepackage{amsmath,amssymb,amsthm,mathtools}
\usepackage{booktabs,tabularx,array,longtable}
\usepackage{graphicx}
\usepackage{placeins}
\usepackage{xcolor}
\usepackage{microtype}
\usepackage{enumitem}
\usepackage{algorithm}
\usepackage{algorithmic}
\usepackage[authoryear,round]{natbib}
\usepackage[font=small,labelfont=bf]{caption}
\usepackage{setspace}
\usepackage{url}
\usepackage[hidelinks]{hyperref}
\AtBeginDocument{\DeclareFontShape{T1}{lmr}{bx}{sc}{<->ssub*lmr/m/sc}{}}
\hypersetup{pdfauthor={Zijian Wei, Yajie Bao, Haojie Ren, Zhen Chen},
  pdftitle={FissionGuard: Testing Local Watermark Proportions with FDR Control}}

\newcommand{\method}{\textsc{FissionGuard}}
\newcommand{\FDR}{\operatorname{FDR}}
\newcommand{\FDP}{\operatorname{FDP}}
\newcommand{\Unif}{\operatorname{Unif}}
\newcommand{\E}{\mathbb E}
\newcommand{\PP}{\mathbb P}
\newcommand{\1}{\mathbf 1}
\newcolumntype{Y}{>{\raggedright\arraybackslash}X}

\theoremstyle{plain}
\newtheorem{theorem}{Theorem}[section]
\newtheorem{lemma}[theorem]{Lemma}
\newtheorem{proposition}[theorem]{Proposition}
\newtheorem{corollary}[theorem]{Corollary}
\theoremstyle{definition}
\newtheorem{assumption}[theorem]{Assumption}

\theoremstyle{remark}

\theoremstyle{definition}

\title{\method: Testing Local Watermark \\ Proportions with FDR Control}
\author{Zijian Wei$^1$ \quad Yajie Bao$^2$ \quad Haojie Ren$^{1,}$\thanks{\raggedright Corresponding authors: Haojie Ren (\texttt{haojieren@sjtu.edu.cn}) and Zhen Chen (\texttt{chenzhendr@sjtu.edu.cn}).} \quad Zhen Chen$^{3,*}$\\[6pt]
$^1$ {\small School of Mathematical Sciences, Shanghai Jiao Tong University}\\
$^2$ {\small School of Management, Fudan University}\\
$^3$ {\small Department of Industrial Engineering \& Management, Shanghai Jiao Tong University}}
\date{\today}
\begin{document}
\maketitle

\begin{abstract}
A watermark detector can flag a text even when only a small fraction of its tokens are watermarked. We study sequential auditing to identify local passages whose watermark proportions exceed a prespecified threshold. Each alarm fixes a candidate passage's endpoint, while its start is chosen from earlier tokens. Reusing the same evidence to locate and test a passage can invalidate inference. We propose \method{}, which uses data fission to construct an alarm value and a validation value from each token-level $p$-value. The alarm values guide alarms and localization. The validation values yield passage-level $p$-values through partial conjunction, excluding as many of the strongest signals as the proportion null allows. These $p$-values are used to test the corresponding hypotheses at levels assigned by an online multiple-testing procedure. Under the stated watermark generation model, allowing dependence among token-level $p$-values at watermarked positions, we prove finite-sample false discovery rate (FDR) control among reported passages at every fixed finite horizon. We also derive a finite-sample lower bound on power for testing nonnull candidates. Experiments on synthetic streams and Mistral and Qwen generations across four marking layouts show gains in watermarked-token coverage over sample splitting, with empirical FDR below the target.
\end{abstract}

\section{Introduction}\label{sec:introduction}

As large language models (LLMs) are increasingly used to generate text, it becomes more important to identify the source of generated content and track its usage. Watermarking has attracted substantial attention as a means of providing statistical evidence of text source \citep{kirchenbauer2023watermark,kuditipudi2024robust,hu2024unbiased}. By linking generation to secret randomness, watermark allows us to test whether the text carries evidence of the corresponding generation mechanism. Texts encountered in practice often combine human writing, model outputs, and subsequent edits. Therefore, in {\it mixed-source text}, unwatermarked content may occur both between and within watermarked passages. A long passage containing a short watermarked insertion may provide strong evidence that the target watermark is present, even when most of its content is unwatermarked. This raises two questions: how to find those passages containing watermark evidence, and how to establish that their watermark proportions exceed a prespecified threshold?

Recent works have studied watermark detection, localization and watermark proportions. \citet{li2026robust} construct robust global tests to detect watermark signals under human edits.  Localization methods identify candidate passages \citep{pan2025waterseeker,zhao2025segments,bonnerjee2026wiser}, and \citet{blanchet2026localization} study token-level multiple testing with asymptotic error rate control. Meanwhile, \citet{li2025proportion} estimate watermark proportions through mixture models for pivotal statistics. For online detection, \citet{su2026watermark} construct e-processes for anytime-valid tests of watermark presence, and \citet{xu2026lordgof} apply online FDR control to document-level goodness-of-fit tests. These works provide foundations for localizing watermarks, but they do not directly establish proportion claims for passages selected adaptively from online evidence.

This work aims to test whether such a passage's watermark proportion exceeds a prespecified threshold, while controlling the false discovery rate (FDR) among reported passages. Reusing the evidence that located a passage can invalidate a test calibrated for a fixed passage \citep{fithian2014optimal,jewell2022changepoint}.
To address this issue, we propose \method{}, which separates passage localization from validation through data fission \citep{leiner2025fission}. As illustrated in Figure~\ref{fig:framework}, each token-level $p$-value is transformed into an alarm value and a validation value.
The alarm values determine when an alarm occurs and locate the candidate passage's starting point. The validation values are reserved until the candidate passage is fixed. Every token can therefore contribute to both tasks, with its information divided between the two values. We then introduce partial-conjunction testing \citep{benjamini2008screening,benjamini2009selective} to construct a passage-level $p$-value that accounts for the watermarked positions allowed under the proportion null. An online multiple-testing procedure \citep{javanmard2018online,ramdas2017decaying,ramdas2018saffron} is applied to the resulting $p$-values to control FDR and determines whether to report the passage.   %

\begin{figure}[!htb]
\centering
\includegraphics[width=\linewidth]{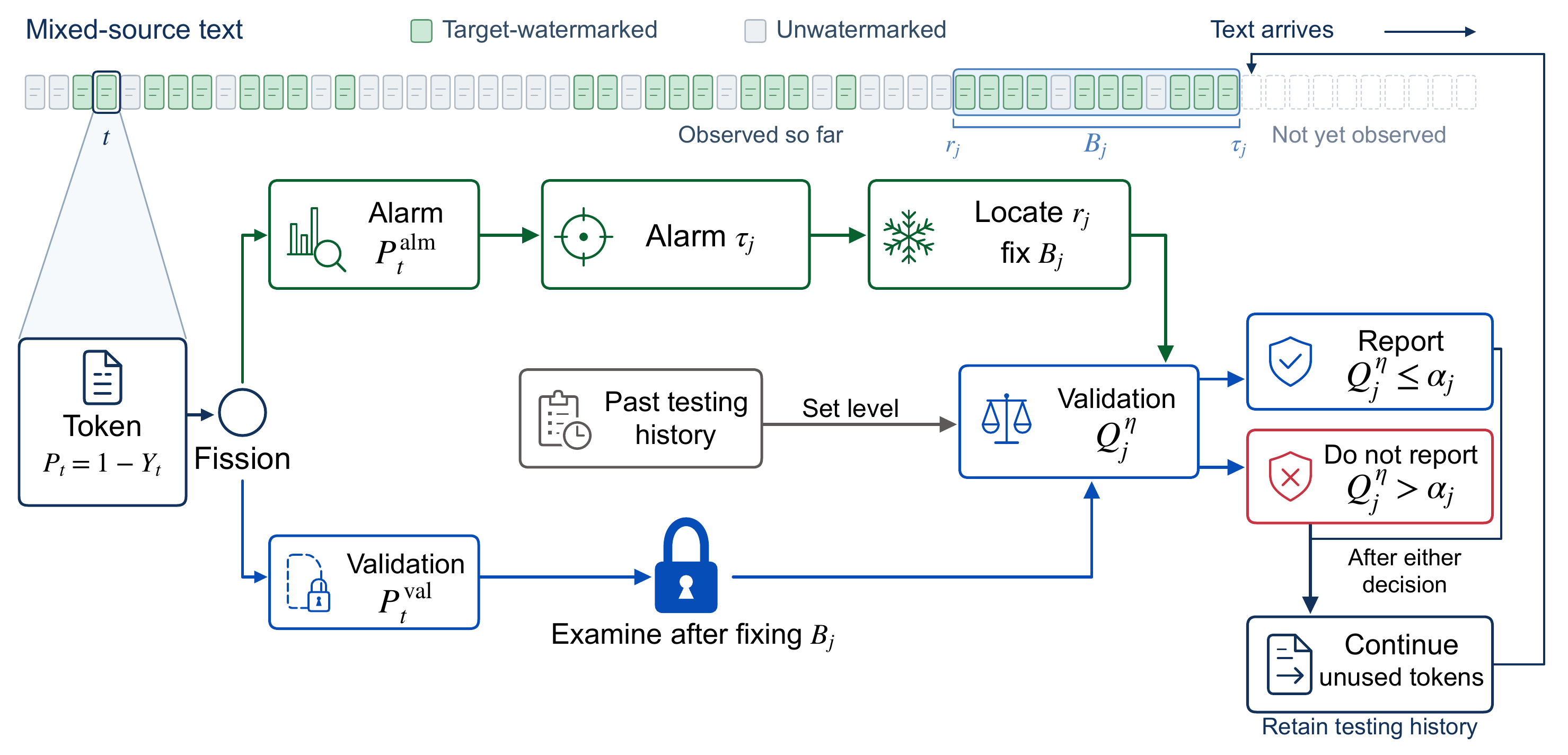}
\caption{\textbf{Locate a candidate passage, then test its proportion null.} Alarm values determine the endpoint $\tau_j$ and starting boundary $r_j$ of a candidate passage $B_j$. Validation values yield the passage-level $p$-value $Q_j^\eta$. After either testing decision, the detector restarts while the online testing history is retained.}
\label{fig:framework}
\end{figure}

The main contributions of this work are summarized as follows:
\begin{itemize}
    \item We develop \method{} to test watermark proportions in candidate passages chosen as tokens arrive, addressing both passage localization and the watermarked positions permitted under the proportion null. 
    \item We establish calibration after selection and finite-sample FDR control among reported passages at every fixed finite horizon, allowing dependence among token-level $p$-values at watermarked positions. We also derive a finite-sample lower bound on power among nonnull candidates. 
    \item Experiments on text generated by Mistral and Qwen show that \method{} achieves high power, token coverage, and qualified coverage, with empirical FDR below the target. The results also illustrate the coverage--purity trade-off.
\end{itemize}

\section{Problem formulation}\label{sec:setup}

\subsection{Statistical framework for watermark detection}
\label{sec:token-evidence}

Let $W_t\in\mathcal V$ denote the token at position $t$, where $\mathcal V$ is the vocabulary. Given the prompt and previously generated tokens, a language model specifies a next-token prediction (NTP) distribution $\pi_t=(\pi_{t,w})_{w\in\mathcal V}$, where $\pi_{t,w}$ is the probability assigned to token $w$. Without watermarking, the next token $W_t$ is sampled directly from $\pi_t$. With watermarking, the model introduces a pseudorandom variable $\zeta_t$ and generates the next token as \(W_t=\mathcal S(\pi_t,\zeta_t)\), where $\mathcal S$ is the watermark decoding function. For example, in the Gumbel-max watermark \citep{aaronson2023watermarking}, the pseudorandom vector $\zeta_t=(U_{t,w})_{w\in\mathcal V}$ consists of $|\mathcal V|$ independent random variables sampled from $\Unif(0,1)$, then the next token is output by $W_t=\arg\max_{w\in \mathcal V}\; \log (U_{t,w})/\pi_{t,w}$.

Following \citet{li2025statistical}, the evidence is summarized by a pivotal statistic $Y_t=Y(W_t,\zeta_t)$ such that $Y(w,\zeta_t)$ has the same known continuous distribution function $F_0$ for every fixed token $w\in\mathcal V$. For the unwatermarked output, we have $Y_t\sim F_0$ regardless of the NTP distribution since $W_t$ and $\zeta_t$ are independent. With larger values indicating stronger watermark evidence, the token-level $p$-value is $P_t=1-F_0(Y_t)$. At unwatermarked positions, $P_t$ is uniform conditional on the preceding history. For example, the Gumbel-max watermark uses $Y_t=U_{t, W_t}$ with $F_0$ being $\Unif(0,1)$. These values provide the inputs for passage-level testing.

\subsection{Passage hypotheses and FDR}
\label{sec:interval-hypotheses}

Let $\mathcal M$ denote the unknown set of positions generated using the target watermark decoder $\mathcal S$, and let $\mathcal N$ contain the remaining positions. Those unwatermarked positions may occur both between and within watermarked passages.

As tokens arrive, candidate passages are identified through two components:
\begin{itemize}
    \item $\mathsf{Alarm}$:
    An online rule that processes token-level evidence as it arrives
    and raises alarms at ordered times $\tau_1,\tau_2,\ldots$.
    The $j$th alarm fixes the endpoint $\tau_j$ of a candidate passage.

    \item $\mathsf{Locate}$:
    At each alarm time $\tau_j$, a localization rule selects a starting
    boundary $0\le r_j<\tau_j$ using evidence available by that time.
    The candidate passage consists of tokens
    $W_{r_j+1},\ldots,W_{\tau_j}$.
\end{itemize}
Here, the alarm rule runs until an alarm occurs, and we restart the detector at the next unused token whether or not the candidate passage is reported. Restarting after an alarm is standard in online monitoring with multiple changes \citep{yu2023onlinechangepoint,du2023dynamic}. Section~\ref{sec:separation} explains how we separate the evidence used for alarm and localization from that used to test the candidate.

For the candidate selected at $\tau_j$, define its token-index interval
and the maximum number of watermarked positions allowed under the null by
\begin{equation}
B_j=\{r_j+1,\ldots,\tau_j\},
\qquad
b_j=\lfloor\eta|B_j|\rfloor,
\label{eq:report-interval}
\end{equation}
where \(\eta\in[0,1)\) is a proportion threshold fixed before monitoring. To validate these passages, we formulate an online multiple-testing problem:
\begin{equation}
H_{0j}^{\eta}:|B_j\cap\mathcal M|\le b_j,
\qquad
H_{1j}^{\eta}:|B_j\cap\mathcal M|>b_j.
\label{eq:interval-null}
\end{equation}
A report asserts that the watermark proportion of $B_j$ exceeds $\eta$. Under the null, at most \(b_j\) positions may be watermarked, no matter how strong their evidence is. At $\eta=0$, it reduces to the unwatermarked null on that passage. 

We assess reliability across reported passages through the false discovery rate (FDR) \citep{benjamini1995fdr,javanmard2018online}.
Let \(R_j\in\{0,1\}\) indicate whether the procedure reports \(B_j\) at time \(\tau_j\), using only information available by that time. At every fixed finite token horizon $T$ and a specified level $\alpha$, the target is
\begin{equation}
\FDR_\eta(T)
=
\E\!\left[
\frac{
\sum_{j:\tau_j\le T}
R_j\1\{|B_j\cap\mathcal M|\le b_j\}
}{
1\vee\sum_{j:\tau_j\le T}R_j
}
\right]
\le\alpha.
\label{eq:alarm-fdr}
\end{equation}
This bounds the expected fraction of reported passages whose watermark proportions do not exceed $\eta$.

Two distinct issues arise when testing these candidate passages. First, token-level $p$-values at unwatermarked positions may not remain valid after a passage is selected using those values. For example, let $P_t$ be independent uniform token-level $p$-values and stop at $\tau=\inf\{t\ge1:P_t\le h\}$ for a fixed $h\in(0,1)$. Then $\PP(P_\tau\le x\mid\tau<\infty)=x/h>x$ for $0<x\le h$, so $P_\tau$ is not a valid $p$-value after selection. Second, when $\eta>0$, a passage satisfying $H_{0j}^\eta$ can contain watermarked tokens with strong evidence. A presence test may therefore reject without establishing that the watermark proportion exceeds $\eta$.

\section{\method{}}\label{sec:method}

\method{} separates passage selection from validation, constructs a test for each candidate passage, and controls errors across reports. The alarm and localization rules use only the alarm values. Validation values are reserved for testing, and completed tests determine subsequent testing levels.

\subsection{Separating alarm from validation via data fission}\label{sec:separation}
Motivated by data fission \citep{leiner2025fission}, we construct an alarm value and a validation value from each token-level $p$-value $P_t$, defined in Section \ref{sec:token-evidence}. Choose $\rho\in(0,1)$, draw $V_t\sim N(0,1)$ independently of the token-level $p$-values, and let $\Phi$ be the standard normal distribution function. Define
\begin{equation}
P_t^{\sf alm}=\Phi\!\left(\sqrt\rho\,\Phi^{-1}(P_t)-\sqrt{1-\rho}\,V_t\right),\quad
P_t^{\sf val}=\Phi\!\left(\sqrt{1-\rho}\,\Phi^{-1}(P_t)+\sqrt\rho\,V_t\right).\label{eq:fission}
\end{equation}
The alarm and localization rules use only $P_t^{\sf alm}$, while $P_t^{\sf val}$ is examined after its candidate passage is fixed. Smaller values indicate stronger watermark evidence in both sequences. Figure~\ref{fig:selection-calibration} illustrates this separation. Under the conditions stated below, validation values at unwatermarked positions remain uniform after passage selection, addressing the first issue in Section~\ref{sec:interval-hypotheses}.
If $P_t \sim \Unif(0,1)$, \eqref{eq:fission} is an orthogonal Gaussian rotation. Larger $\rho$ increases the information available for alarms and localization and decreases it in validation, and $\rho=1/2$ gives a symmetric allocation.

\begin{figure}[!htb]
\centering
\includegraphics[width=0.95\linewidth]{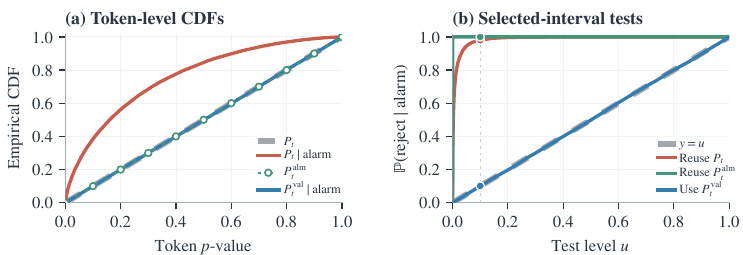}
\caption{\textbf{Evidence reuse after a false alarm.} Illustration on unwatermarked synthetic data. (a) Token-level CDFs before and after selection. (b) Rejection probabilities under the null conditional on an alarm.}
\label{fig:selection-calibration}
\end{figure}

We analyze the statistical watermark detection framework introduced in Section~\ref{sec:token-evidence}. Before that, we impose the following assumption.

\begin{assumption}[Unused watermark randomness]\label{ass:interface}
The pseudorandom variables $(\zeta_t)_{t\in\mathcal N}$ at unwatermarked positions are mutually independent, with each $\zeta_t$ following the distribution specified by the watermarking scheme. Moreover,
\begin{equation}
(\zeta_t)_{t\in\mathcal N}\perp
\bigl(W_{1:\infty},(\zeta_t)_{t\in\mathcal M}\bigr).
\label{eq:unused-randomness}
\end{equation}
\end{assumption}
Assumption~\ref{ass:interface} restricts the pseudorandom variables at unwatermarked positions while allowing dependence in the text and among pseudorandom variables at watermarked positions. Appendix~\ref{app:watermark-input} discusses the shared conditions in earlier watermark models, gives a sufficient generation model, and proves the resulting joint distribution of the token-level $p$-values.

\begin{proposition}[Conditional distribution of validation values]\label{thm:factorization}
Under Assumption~\ref{ass:interface} and independent fission noise,
\begin{equation}
P_t^{\sf val}\mid\bigl(P_{1:\infty}^{\sf alm},P_{\mathcal M}^{\sf val}\bigr)\overset{\rm iid}{\sim}\Unif(0,1),\qquad t\in\mathcal N.
\label{eq:complete-separation}
\end{equation}
\end{proposition}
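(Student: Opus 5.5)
Our plan is to reduce the claim to two ingredients: the token-level $p$-values at unwatermarked positions are i.i.d.\ uniform and independent of everything attached to watermarked positions, and for a single uniform input the fission map \eqref{eq:fission} is an orthogonal rotation of independent Gaussians. Throughout, $\mathcal M$ and $\mathcal N$ are treated as fixed index sets.

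\textbf{Step 1 (joint law of the $P_t$).} We will show that, conditional on $\bigl(W_{1:\infty},(\zeta_t)_{t\in\mathcal M}\bigr)$, the variables $(P_t)_{t\in\mathcal N}$ are i.i.d.\ $\Unif(0,1)$. For $t\in\mathcal N$ we have $P_t=1-F_0(Y(W_t,\zeta_t))$. By \eqref{eq:unused-randomness}, conditioning on $W_{1:\infty}$ leaves the joint law of $(\zeta_t)_{t\in\mathcal N}$ unchanged, so these variables stay mutually independent with their scheme distributions. The pivotal property then makes $Y(w,\zeta_t)\sim F_0$ for each fixed $w$, and continuity of $F_0$ gives $P_t\sim\Unif(0,1)$ for each fixed value of $W_t$. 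Since the conditional law does not depend on the conditioning values, $(P_t)_{t\in\mathcal N}$ is i.i.d.\ uniform and independent of $\mathcal G:=\sigma\bigl(W_{1:\infty},(\zeta_t)_{t\in\mathcal M}\bigr)$. Every $P_t$ with $t\in\mathcal M$ is $\mathcal G$-measurable. We then adjoin the fission noise $(V_t)_{t\ge1}$, which is i.i.d.\ $N(0,1)$ and independent of all token-level quantities. It follows that the pairs $(P_t,V_t)_{t\in\mathcal N}$ are i.i.d., and jointly independent of $\mathcal G\vee\sigma(V_t:t\in\mathcal M)$.

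\textbf{Step 2 (Gaussian rotation).} For $t\in\mathcal N$ set $Z_t=\Phi^{-1}(P_t)$, so that $(Z_t,V_t)$ are independent standard normals. Then
\[
A_t=\sqrt\rho\,Z_t-\sqrt{1-\rho}\,V_t,\qquad B_t=\sqrt{1-\rho}\,Z_t+\sqrt\rho\,V_t
\]
is an orthogonal transform, since $\mathrm{Cov}(A_t,B_t)=\sqrt{\rho(1-\rho)}-\sqrt{\rho(1-\rho)}=0$. Hence $A_t$ and $B_t$ are independent $N(0,1)$. Consequently $P_t^{\sf alm}=\Phi(A_t)$ and $P_t^{\sf val}=\Phi(B_t)$ are independent uniforms. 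Combining this with Step 1, the family $(P_t^{\sf val})_{t\in\mathcal N}$ is i.i.d.\ uniform and independent of the $\sigma$-field generated by $(P_t^{\sf alm})_{t\in\mathcal N}$ together with $\mathcal G$ and $(V_t)_{t\in\mathcal M}$.

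\textbf{Step 3 (conclusion).} For $t\in\mathcal M$, both $P_t^{\sf alm}$ and $P_t^{\sf val}$ are functions of $(P_t,V_t)$, which is measurable with respect to $\mathcal G\vee\sigma(V_t:t\in\mathcal M)$. So the conditioning vector $\bigl(P_{1:\infty}^{\sf alm},P_{\mathcal M}^{\sf val}\bigr)$ is measurable with respect to a $\sigma$-field independent of $(P_t^{\sf val})_{t\in\mathcal N}$, and \eqref{eq:complete-separation} follows.

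The main obstacle is Step 1: it requires a measure-theoretic argument that the pivotal property holds conditionally even though $W_t$ may depend on the whole text, including future tokens. We intend to handle it by computing, for bounded measurable functions, the conditional expectation given $\mathcal G$ via the independence in \eqref{eq:unused-randomness}, using a Fubini-type substitution lemma with $W_{1:\infty}$ frozen. Infinitely many indices are handled by checking finite-dimensional distributions and invoking a $\pi$--$\lambda$ argument.
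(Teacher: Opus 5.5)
Your proposal is correct and follows essentially the same route as the paper's proof. You condition on $\mathcal G=\sigma\bigl(W_{1:\infty},(\zeta_t)_{t\in\mathcal M}\bigr)$ to show that $(P_t)_{t\in\mathcal N}$ is i.i.d.\ $\Unif(0,1)$ and independent of $\mathcal G\supseteq\sigma(P_{\mathcal M})$, then apply the orthogonal Gaussian rotation with independent fission noise, and extend from finite-dimensional distributions by a monotone-class argument.
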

The proposition gives the distribution of the reserved validation values after conditioning on the complete alarm sequence. Conditioning also on marked validation values accounts for their possible presence inside a passage satisfying $H_{0j}^\eta$. Appendix~\ref{app:watermark-input} gives the proof. 

\subsection{From an alarm to a reported passage}\label{sec:reporting}
An online detector raises an alarm, and a localization rule chooses the start of the candidate passage. Let $\mathsf{Alarm}$ be a measurable binary function of the alarm values observed since the most recent restart and the declared request boundaries. It resets after each alarm or external request end. Its ordered alarm times satisfy
\begin{equation}
\tau_j=\inf\{t>\tau_{j-1}:\mathsf{Alarm}(P_{1:t}^{\sf alm})=1\},\qquad\tau_0=0,\quad\inf\varnothing=\infty.\label{eq:alarm-time}
\end{equation}
Thus $\{\tau_j\le t\}$ is measurable with respect to the available alarm information. CUSUM \citep{page1954inspection}, FOCuS \citep{romano2023focus} and e-process detectors \citep{ramdas2023savi} can serve as alarm rules. The validity argument does not require a particular detector. Our watermark experiments use a Stouffer scan. Its settings are given in Appendix~\ref{app:implementation}.

At an alarm, $\mathsf{Locate}$ chooses the candidate's starting boundary using a measurable localization score:
\begin{equation}
r_j=\left[\underset{1\le s\le\tau_j}{\arg\max}\;\ell_j(s;P_{1:\tau_j}^{\sf alm})\right]-1.\label{eq:localizer}
\end{equation}
Starts before the most recent restart or yielding lengths outside the allowed range receive score $-\infty$. The implementation uses the positive standardized sum of normal scores (Appendix~\ref{app:implementation}). These choices fix $\tau_j$, $r_j$, $B_j$ and $b_j$ before using the validation values. After either testing decision, the detector restarts at the next token. Each candidate passage uses only tokens observed since the most recent restart, so candidate passages are disjoint and no validation value is reused.

The null in~\eqref{eq:interval-null} is a partial-conjunction null \citep{benjamini2008screening,benjamini2009selective}: it allows at most $b_j$ watermarked positions, while the alternative requires at least $b_j+1$. For each deterministic nonempty subset $I$, fix a combination rule $\mathcal Q_I$ calibrated on independent uniforms. Define the passage-level $p$-value for $H_{0j}^\eta$ by
\begin{equation}
Q_j^\eta=\max_{\substack{I\subseteq B_j\\|I|=|B_j|-b_j}}\mathcal Q_I(P_I^{\sf val}).
\label{eq:confirmation-interface}
\end{equation}
This partial-conjunction construction \citep{wang2019partial} uses the fact that, under the null, at least one subset in this maximum is entirely unwatermarked. Once its validation coordinates retain their independent uniform law, the maximum dominates a combined $p$-value on that subset and is therefore conservative. Section~\ref{sec:separation} establishes the required conditional law. The combination rule is fixed before validation is examined. Appendix~\ref{app:implementation} gives the implementation.

Let $\mathcal A$ be an online multiple-testing algorithm at target level $\alpha$, chosen from SAFFRON \citep{ramdas2018saffron}, LORD \citep{javanmard2018online}, and LORD++ \citep{ramdas2017decaying}, with their respective parameters. Before using the current validation values, $\mathcal A$ assigns $\alpha_j$ using only completed tests. The reporting indicator is
\begin{equation}
R_j=\1\{Q_j^\eta\le\alpha_j\}.
\label{eq:report-decision}
\end{equation}
Every completed test updates the same testing history, including tests that do not report. At an external request end, the detector is reset without creating a candidate from the remaining tokens. The testing history and the candidate counter are retained.

\begin{algorithm}[!htb]
\caption{\method{} for a fixed proportion threshold $\eta$}\label{alg:fissionguard}
\begin{algorithmic}[1]
\REQUIRE Token-level $p$-values $P_t$, allocation $\rho$, threshold $\eta$, alarm and localization rules, calibrated combinations $\mathcal Q_I$, online algorithm $\mathcal A$ at target $\alpha$.
\STATE Initialize the detector, candidate index $j=1$, and $\mathcal A$ with empty testing history.
\FOR{each arriving token $t$}
\STATE Form $P_t^{\sf alm}$ and $P_t^{\sf val}$ by~\eqref{eq:fission}. Use only $P_t^{\sf alm}$ for alarm and localization.
\IF{the alarm rule triggers as in~\eqref{eq:alarm-time}}
\STATE Fix $\tau_j$, $r_j$, $B_j$ and $b_j$ using only the alarm values.
\STATE Use $\mathcal A$ to assign $\alpha_j$ from completed tests, then compute $Q_j^\eta$ from $P_{B_j}^{\sf val}$.
\STATE Report $B_j$ if $Q_j^\eta\le\alpha_j$, and update $\mathcal A$ after either decision.
\STATE Increment $j$ and restart the detector at the next unused token.
\ENDIF
\STATE At an external request end, reset the detector without creating a candidate from the remaining tokens; retain the testing history and $j$.
\ENDFOR
\ENSURE Alarm and localization rules do not read validation values or test outcomes, and validation coordinates are not reused.
\end{algorithmic}
\end{algorithm}

\begin{theorem}[FDR control for reported passages]\label{cor:online-fdr}
Under Assumption~\ref{ass:interface} and independent fission noise, let Algorithm~\ref{alg:fissionguard}, with $\mathcal A$ chosen as above, use alarm and localization rules that depend only on the alarm values and fixed external design. For each true $H_{0j}^\eta$,
\begin{equation}
\PP\!\left(Q_j^\eta\le u\mid P_{1:\infty}^{\sf alm},P_{\mathcal M}^{\sf val},Q_{1:j-1}^\eta\right)\le u,\qquad 0\le u\le1.
\label{eq:selected-calibration}
\end{equation}
Consequently, the procedure above satisfies $\FDR_\eta(T)\le\alpha$ at every fixed finite token horizon $T$.
\end{theorem}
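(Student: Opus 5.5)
The argument has two parts: first the conditional calibration \eqref{eq:selected-calibration}, then the FDR bound obtained from the standard online-FDR results for SAFFRON, LORD and LORD++.

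\textbf{Step 1: the candidate is fixed by the conditioning.} Let $\mathcal G=\sigma(P_{1:\infty}^{\sf alm},P_{\mathcal M}^{\sf val})$. The alarm times in \eqref{eq:alarm-time} and the starts in \eqref{eq:localizer} are measurable functions of the alarm values and the fixed external design. Hence $\tau_j$, $r_j$, $B_j$ and $b_j$ are all $\mathcal G$-measurable. The set $\mathcal M$ is deterministic or part of the fixed design, so whether $H_{0j}^\eta$ holds is also $\mathcal G$-measurable. By Proposition~\ref{thm:factorization}, conditional on $\mathcal G$ the coordinates $(P_t^{\sf val})_{t\in\mathcal N}$ are iid $\Unif(0,1)$.

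\textbf{Step 2: earlier $p$-values carry no extra information.} The candidate passages are disjoint, since each uses only tokens observed after the most recent restart. Therefore $Q_{1:j-1}^\eta$ is a function of $\mathcal G$-measurable quantities together with $P^{\sf val}_t$ for $t\in\mathcal N\cap\bigcup_{i<j}B_i$. The coordinates used by $Q_j^\eta$ lie in $B_j$, which is disjoint from $\bigcup_{i<j}B_i$. Conditional independence then gives
\[
P_{B_j\cap\mathcal N}^{\sf val}\mid \mathcal G,Q_{1:j-1}^\eta \;\sim\; \text{iid }\Unif(0,1).
\]
One subtlety is the event $\tau_j=\infty$; I would handle it by conditioning on $\{\tau_j<\infty\}$, which is itself in $\mathcal G$.

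\textbf{Step 3: the partial-conjunction bound.} Under $H_{0j}^\eta$ we have $|B_j\cap\mathcal M|\le b_j$. Hence there is a $\mathcal G$-measurable subset $I^*\subseteq B_j\cap\mathcal N$ with $|I^*|=|B_j|-b_j$; take, for instance, the first such indices. This subset is nonempty because $\eta<1$. Then $Q_j^\eta\ge\mathcal Q_{I^*}(P^{\sf val}_{I^*})$. Conditionally, $I^*$ is fixed and its coordinates are iid uniform. Since each $\mathcal Q_I$ is calibrated on independent uniforms, $\PP(\mathcal Q_{I^*}\le u\mid\cdot)\le u$, and \eqref{eq:selected-calibration} follows.

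\textbf{Step 4: FDR control.} The test levels satisfy $\alpha_j\in\sigma(Q_{1:j-1}^\eta,\mathcal G)$, because $\mathcal A$ uses only completed tests and its candidate count is driven by alarm times. Combined with \eqref{eq:selected-calibration}, the null $p$-values are therefore conditionally super-uniform given the filtration $\mathcal F_{j-1}=\sigma(\mathcal G,Q_{1:j-1}^\eta)$. This is exactly the hypothesis under which SAFFRON, LORD and LORD++ control FDR. For SAFFRON we additionally need that $\alpha_j$ is monotone in past rejection and candidate indicators. I would invoke their theorems conditionally on $\mathcal G$, where the index set $\{j:\tau_j\le T\}$ is a fixed finite stopping index. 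The cited theorems control FDR at every fixed number of tests, which yields the conditional bound $\le\alpha$; taking expectations gives $\FDR_\eta(T)\le\alpha$.

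The main obstacle is precisely this transfer in Step 4. The published proofs assume $p$-values that are super-uniform given past decisions, whereas here the number of tests up to $T$ is random. Conditioning on $\mathcal G$ makes that number deterministic, and the filtration then only needs to contain $\mathcal F_{j-1}$. I would check that the monotonicity and super-uniformity arguments in those proofs go through verbatim with $\mathcal G$ added to every $\sigma$-field.
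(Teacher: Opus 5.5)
\textbf{Step 4 invokes the wrong hypothesis for the online FDR theorems.} Steps 1--3 follow the paper's argument. Conditioning on $\mathcal G=\sigma(P^{\sf alm}_{1:\infty},P^{\sf val}_{\mathcal M})$ fixes $B_j$, $b_j$ and the null labels. Proposition~\ref{thm:factorization} makes the unwatermarked validation coordinates conditionally iid uniform. The first $|B_j|-b_j$ unwatermarked indices then give a subset on which the maximum in~\eqref{eq:confirmation-interface} dominates a calibrated combination.

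The gap is in Step 4. Super-uniformity of each null $Q_j^\eta$ given $\mathcal F_{j-1}=\sigma(\mathcal G,Q^\eta_{1:j-1})$ is not the hypothesis of the FDR theorems for SAFFRON, LORD or LORD++. Under that condition alone, those papers give mFDR control, not FDR control; the paper states this distinction explicitly for SAFFRON. Their FDR results require two things: null $p$-values independent of each other and of the non-null $p$-values, and a monotone level rule. Monotonicity is needed for all three procedures, not only SAFFRON.

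This is structural, not a technicality. Write $R(T)=\sum_{j:\tau_j\le T}R_j$; the denominator $1\vee R(T)$ depends on later decisions, and hence on later $p$-values. The lemma that bounds $\E[\1\{Q_j^\eta\le\alpha_j\}/(1\vee R(T))]$ by $\E[\alpha_j/(1\vee R(T))]$ replaces $Q_j^\eta$ by zero in the subsequent decisions and then integrates $Q_j^\eta$ out. That step needs $Q_j^\eta$ to be independent of the later $p$-values driving those decisions. If you carry forward only~\eqref{eq:selected-calibration}, the ``verbatim'' check you propose fails at exactly this point.

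\textbf{The fix.} The missing conclusion follows from the ingredients of your Step 2, but you must state it in its stronger form. Conditional on $\mathcal G$:
\begin{itemize}
\item all watermarked validation values are constants;
\item every $Q_j^\eta$ with $\tau_j\le T$, null or non-null, is a measurable function of the unwatermarked validation coordinates in its own passage and of $\mathcal G$-measurable quantities;
\item these passages are $\mathcal G$-measurable and pairwise disjoint.
\end{itemize}
Since the unwatermarked coordinates are conditionally iid uniform, the finite family $(Q_j^\eta)_{j:\tau_j\le T}$ is conditionally mutually independent, and each null member is super-uniform by Step 3.

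This is what the paper establishes before it applies the independent-$p$-value FDR theorems for the monotone SAFFRON, LORD and LORD++ rules to the fixed conditional sequence and integrates over $\mathcal G$. With Step 4 rewritten this way, your proof coincides with the paper's.
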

Conditioning fixes the candidate passages, their labels and their number by time $T$. On a passage satisfying $H_{0j}^\eta$, at least $|B_j|-b_j$ coordinates are unwatermarked, so the maximum in~\eqref{eq:confirmation-interface} dominates a calibrated combination on a subset consisting only of unwatermarked positions. Disjoint passages give conditional independence even when some marked coordinates lie in a true null. Applying the FDR guarantee for independent p-values of $\mathcal A$ under this conditioning and then averaging proves the claim. Appendix~\ref{app:proofs} gives the full proof. At $\eta=0$, the maximum has one term and validation reduces to the full combination.

\subsection{Power analysis}\label{sec:power}
Beyond FDR control, we also analyze power to assess how often nonnull candidate passages are reported. For each candidate produced by time $T$, let $D_j^\eta=\1\{|B_j\cap\mathcal M|>b_j\}$. We measure power among candidate passages satisfying $H_{1j}^\eta$ by
\begin{equation}
\mathrm{Pow}_\eta(T)=\frac{\E\left[\sum_{j:\tau_j\le T}D_j^\eta R_j\right]}{\E\left[\sum_{j:\tau_j\le T}D_j^\eta\right]}.
\label{eq:conditional-power-target}
\end{equation}
When the denominator is zero, we set $\mathrm{Pow}_\eta(T)=0$. Power measures successful tests among the method's nonnull candidate passages. We report token coverage separately to measure recovery across all watermarked positions. For each deterministic interval $B$, define $Q_B^\eta$ from its validation coordinates whether or not it becomes a candidate.

\begin{theorem}[A lower bound on power]\label{thm:main-power}
Assume $\E\left[\sum_{j:\tau_j\le T}D_j^\eta\right]>0$. Let $\mathcal B_T$ be a deterministic finite family of nonempty intervals in $\{1,\ldots,T\}$. Suppose every testing level assigned by Algorithm~\ref{alg:fissionguard} through $T$ is at least a deterministic $u\in(0,1)$. Then
\begin{equation}
\mathrm{Pow}_\eta(T)\ge\left[1-
\frac{\E\left[\sum_{j:\tau_j\le T}D_j^\eta\1\{B_j\notin\mathcal B_T\}\right]
+\sum_{B\in\mathcal B_T}\PP\{Q_B^\eta>u\}}
{\E\left[\sum_{j:\tau_j\le T}D_j^\eta\right]}\right]_+.
\label{eq:main-power}
\end{equation}
\end{theorem}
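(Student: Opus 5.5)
The plan is to lower-bound the numerator $\E[\sum_{j:\tau_j\le T}D_j^\eta R_j]$ by the denominator minus two error terms, then divide. The only probabilistic content is that on a candidate that lies in $\mathcal B_T$ and has $Q_j^\eta\le u$, the report is automatic, because every level satisfies $\alpha_j\ge u$. No independence or fission argument is needed. The bound is a deterministic union-type bound.

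First, for each $j$ with $\tau_j\le T$, write the pathwise inequality
\begin{equation*}
D_j^\eta R_j \;\ge\; D_j^\eta\1\{B_j\in\mathcal B_T\}\1\{Q_j^\eta\le\alpha_j\}\;\ge\; D_j^\eta - D_j^\eta\1\{B_j\notin\mathcal B_T\} - \1\{B_j\in\mathcal B_T\}\1\{Q_{B_j}^\eta>u\}.
\end{equation*}
Here $R_j=\1\{Q_j^\eta\le\alpha_j\}$ from \eqref{eq:report-decision} and $\alpha_j\ge u$ by hypothesis. On $B_j\in\mathcal B_T$ we have $Q_j^\eta=Q_{B_j}^\eta$, since $Q_j^\eta$ is the deterministic functional \eqref{eq:confirmation-interface} of the validation coordinates on $B_j$, which is exactly how $Q_B^\eta$ is defined for deterministic $B$. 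I also use $D_j^\eta\le1$ to drop $D_j^\eta$ from the last term.

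Second, sum over $j$ with $\tau_j\le T$. The key step is controlling $\sum_j\1\{B_j\in\mathcal B_T\}\1\{Q_{B_j}^\eta>u\}$. The candidate passages are disjoint and distinct, because each uses only tokens since the most recent restart (Section~\ref{sec:reporting}). Hence each $B\in\mathcal B_T$ equals $B_j$ for at most one $j$, giving
\begin{equation*}
\sum_{j:\tau_j\le T}\1\{B_j\in\mathcal B_T\}\1\{Q_{B_j}^\eta>u\}\le\sum_{B\in\mathcal B_T}\1\{Q_B^\eta>u\}.
\end{equation*}
Taking expectations gives $\sum_{B\in\mathcal B_T}\PP\{Q_B^\eta>u\}$. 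The number of candidates by time $T$ is at most $T$, so all sums are finite and expectations can be exchanged with sums.

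Third, combine and divide by $\E[\sum_{j:\tau_j\le T}D_j^\eta]>0$. This yields $\mathrm{Pow}_\eta(T)\ge1-(\cdots)/(\cdots)$. Since $\mathrm{Pow}_\eta(T)\ge0$ trivially, the positive part follows.

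The only delicate point I expect is the identification $Q_j^\eta=Q_{B_j}^\eta$ together with the injectivity of $j\mapsto B_j$. This must hold pathwise, including on events where the random interval $B_j$ coincides with a deterministic one. Otherwise the union bound over $\mathcal B_T$ could double count. Disjointness of candidates handles this, and the measurability of the event $\{B_j=B\}$ follows from \eqref{eq:alarm-time}–\eqref{eq:localizer}.
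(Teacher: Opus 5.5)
Your proposal is correct and takes essentially the same approach as the paper. The paper bounds $\sum_j D_j^\eta(1-R_j)$ pathwise by $\sum_j D_j^\eta\1\{B_j\notin\mathcal B_T\}+\sum_{B\in\mathcal B_T}\1\{Q_B^\eta>u\}$, using $\alpha_j\ge u$ and the fact that disjoint nonempty candidates select each deterministic interval at most once, then takes expectations and divides by the expected nonnull count. Your lower bound on $D_j^\eta R_j$ is that same inequality rearranged.
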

The first numerator term counts nonnull candidates outside $\mathcal B_T$, and the second bounds failures to reject within it. The proof uses a counting argument. For the partial-conjunction test based on Stouffer's method, Lemma~\ref{lem:confirmation-tail} in Appendix~\ref{app:power-main-proof} bounds $\PP\{Q_B^\eta>u\}$ using capped normal scores.

\begin{corollary}[An exponential lower bound on power]\label{cor:exponential-power}
In the setting of Theorem~\ref{thm:main-power}, suppose the conditions on Gumbel-max generation and signal strength in Appendix~\ref{app:power-statements} hold. There are constants $C,c_1>0$ such that, if every $B\in\mathcal B_T$ has length at least $n_{\min}$ and $n_{\min}\ge C\{1+\log(1/u)\}$, then
\begin{equation}
\mathrm{Pow}_\eta(T)\ge\left[1-
\frac{\E\left[\sum_{j:\tau_j\le T}D_j^\eta\1\{B_j\notin\mathcal B_T\}\right]
+|\mathcal B_T|e^{-c_1n_{\min}}}
{\E\left[\sum_{j:\tau_j\le T}D_j^\eta\right]}\right]_+.
\label{eq:exponential-power-bound}
\end{equation}
Moreover, $\mathrm{Pow}_\eta(T)\to1$ if all nonnull candidates satisfy the same signal and length conditions and have length at least a sufficiently large multiple of $\log T$, their expected count is bounded away from zero, and testing levels are bounded below by an inverse polynomial in $T$.
\end{corollary}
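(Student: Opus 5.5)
The plan is to obtain the corollary from Theorem~\ref{thm:main-power}. Its first numerator term and denominator are the same here, so the only thing to show is
\[
\PP\{Q_B^\eta>u\}\le e^{-c_1|B|}\le e^{-c_1 n_{\min}}\qquad\text{for every } B\in\mathcal B_T .
\]
Summing this over $\mathcal B_T$ gives $|\mathcal B_T|e^{-c_1n_{\min}}$, and substituting into \eqref{eq:main-power} yields \eqref{eq:exponential-power-bound}.

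\textbf{Step 1: write out the Stouffer partial-conjunction $p$-value.} For a fixed interval $B$ of length $n$, write $m=n-b$ with $b=\lfloor\eta n\rfloor$, and let $Z_t=\Phi^{-1}(1-P_t^{\sf val})$. With Stouffer's rule, $\mathcal Q_I(P_I^{\sf val})=1-\Phi\bigl(\sum_{t\in I}Z_t/\sqrt{m}\bigr)$. The maximum over subsets $I$ of size $m$ is attained at the $m$ smallest scores. Hence
\[
\{Q_B^\eta>u\}=\Bigl\{\textstyle\sum_{t\in I^\ast}Z_t<\sqrt{m}\,\Phi^{-1}(1-u)\Bigr\},
\]
where $I^\ast$ is the set of the $m$ smallest $Z_t$.

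\textbf{Step 2: cap the scores and use a deterministic lower bound.} Replace $Z_t$ by capped scores $\tilde Z_t=\min(Z_t,K)$, which can only make the event larger. Dropping the $b$ largest capped scores removes at most $bK$. Therefore
\[
\sum_{t\in I^\ast}Z_t\ \ge\ \sum_{t\in B}\tilde Z_t-bK .
\]
I expect this to be the content of Lemma~\ref{lem:confirmation-tail}. The bad event is then contained in
\[
\Bigl\{\textstyle\sum_{t\in B}\tilde Z_t<\eta nK+\sqrt{n}\,\Phi^{-1}(1-u)\Bigr\}.
\]
Since $u<1/2$ without loss of generality, we have $\Phi^{-1}(1-u)\le\sqrt{2\log(1/u)}$.

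\textbf{Step 3: exponential concentration.} The signal-strength condition in Appendix~\ref{app:power-statements} should guarantee a conditional mean gap:
\[
\E[\tilde Z_t\mid\mathcal F_{t-1}]\ge\mu \ \text{ at watermarked positions},\qquad \E[\tilde Z_t\mid\mathcal F_{t-1}]\ge\mu_0 \ \text{ elsewhere}.
\]
For Gumbel-max, the first bound follows from NTP distributions whose maximal probability is bounded away from $1$, together with Proposition~\ref{thm:factorization}. Combined with the nonnull fraction $|B\cap\mathcal M|/n\ge\eta+\delta$, this gives
\[
\textstyle\sum_{t\in B}\E[\tilde Z_t\mid\mathcal F_{t-1}]\ge(\eta K+\gamma)\,n \quad\text{for some } \gamma>0 .
\]
The capped scores lie in $[\Phi^{-1}(\epsilon)\text{-type lower truncation},K]$, so they are bounded. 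Azuma--Hoeffding applied to the martingale differences $\tilde Z_t-\E[\tilde Z_t\mid\mathcal F_{t-1}]$ then gives
\[
\PP\Bigl\{\textstyle\sum\tilde Z_t<(\eta K+\gamma)n-\gamma n/2\Bigr\}\le e^{-c n}.
\]
This handles possible dependence among marked positions, which is why I use a martingale bound rather than independence. The term $\sqrt{2n\log(1/u)}$ is at most $\gamma n/2$ once $n\ge C\{1+\log(1/u)\}$. That step fixes $C$, and the tail bound gives $c_1$.

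\textbf{Main obstacle.} I expect the hardest step to be securing a uniform mean gap $\gamma$ for the capped scores under Gumbel-max. The fission noise dilutes the signal by the factor $\sqrt{1-\rho}$, and the cap $K$ must be chosen so that $\eta K$ stays below the capped mean. My first attempt will be to take $K$ large enough that capping changes the watermarked mean only slightly, and then to check that the signal condition in Appendix~\ref{app:power-statements} is stated as exactly this gap.

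\textbf{Asymptotic claim.} Take $\mathcal B_T$ to be all intervals of length at least $n_{\min}=A\log T$ that satisfy the signal conditions. Then $|\mathcal B_T|\le T^2$, and the first numerator term vanishes. If $u\ge T^{-k}$, then $\log(1/u)\le k\log T$, so $n_{\min}\ge C\{1+\log(1/u)\}$ holds for $A$ large. Also $T^2e^{-c_1A\log T}\to0$ once $A>2/c_1$. Since the denominator is bounded away from zero, $\mathrm{Pow}_\eta(T)\to1$.
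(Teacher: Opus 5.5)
You have the right overall structure, and it matches the paper: the reduction to a per-interval tail bound, the capping inequality $\sum_{t\in I^\ast}Z_t\ge\sum_{t\in B}\min(Z_t,K)-bK$ (this is the variational representation in Lemma~\ref{lem:confirmation-tail}), the union over $\mathcal B_T$, and the asymptotic step with $|\mathcal B_T|\le T^2$. Step~3, however, has two genuine gaps.

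\textbf{Azuma--Hoeffding does not apply.} The capped scores $\min(Z_t,K)$ are bounded above by $K$ but have a Gaussian lower tail, because the fission noise makes $P_t^{\sf val}$ arbitrarily close to one at every position. The procedure contains no ``lower truncation.'' Introducing one, $\max\{\min(Z_t,K),-L\}$, only increases the sum, which reverses the containment you need.

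What works is a conditional Chernoff bound. Given $\mathcal F_{t-1}$, the validation score is exactly $N(0,1)$ at unwatermarked positions. At watermarked positions it stochastically dominates $Z_\delta$. This comes from the Gumbel-max bound $\PP(P\le p)\ge1-(1-p)^{1/(1-\delta)}$ and the fact that Gaussian fission noise preserves stochastic order; it does not come from Proposition~\ref{thm:factorization}, which says nothing about watermarked coordinates. Since $z\mapsto e^{-\theta\min(z,c)}$ is nonincreasing, the conditional moments are bounded by the finite quantities $M_\delta(c,\theta)\le M_0(c,\theta)$. The tower property then gives the bound in Lemma~\ref{lem:confirmation-tail}. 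From $\Lambda_c(0)=0$ and $\Lambda_c'(0)=-r(c)<0$, the paper obtains $\theta_0>0$ with $\Lambda_c(\theta_0)=-c_0<0$. This yields $C=1+8\theta_0^2/c_0^2$ and $c_1=c_0/2$.

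\textbf{The mean gap must be proved, not assumed.} The appendix condition only posits a probability cap $\max_v\pi_{t,v}\le1-\delta$ with $\delta>\delta_*$. Showing that some $\delta_*(\rho,\kappa,\eta)<1$ makes
\[
r(c)=(1-\kappa)\E[\min(G,c)]+\kappa\E[\min(Z_\delta,c)]-\eta c
\]
positive for some $c$ is the core of the corollary.

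Your plan to take $K$ large fails whenever $\eta>0$. In that limit $\E[\min(Z_\delta,K)]\to\E[Z_\delta]<\infty$ while $\eta K\to\infty$, so $r(K)\to-\infty$. The best cap is the $(1-\eta)$ quantile of the score mixture, and positivity is a trimmed-mean condition. It can fail even when $\kappa>\eta$ and $\delta>0$, as the paper's weak-signal example shows.

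The paper instead fixes the moderate cap $c=1+(1-\kappa)\phi(0)/(\kappa-\eta)$ and lets the signal strengthen. Since $\PP(-\Phi^{-1}(P_\delta)\le x)=\Phi(x)^{1/(1-\delta)}$, $Z_\delta\to\infty$ in probability as $\delta\uparrow1$. Dominated convergence then gives $\E[\min(Z_\delta,c)]\to c$. Hence $r(c)\to(\kappa-\eta)c+(1-\kappa)\E[\min(G,c)]\ge\kappa-\eta>0$, and monotonicity in $\delta$ yields $\delta_*$. Your large-$K$ idea does work when $\eta=0$, because $\E[Z_\delta]>0$ for every $\delta>0$.
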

The corollary gives exponential decay in passage length for the nonrejection probability on intervals satisfying the signal condition, while allowing unwatermarked positions within those intervals. Appendix~\ref{app:power} gives the full conditions, proofs, and the inverse-transform analogue; Appendix~\ref{app:theory-diagnostic} evaluates the finite-sample bound.

\section{Experiments}\label{sec:experiments}
We compare \method{} with Splitting, e-process restart \citep{su2026watermark}, and Raw-p restart at $\alpha=0.1$, with default proportion threshold $\eta=0.25$ and $\rho=1/2$ for \method{}. \method{} and Splitting share the alarm and localization rules and test passage hypotheses using partial-conjunction tests and SAFFRON \citep{ramdas2018saffron}. Splitting uses odd-indexed tokens for alarms and localization and even-indexed tokens for validation. Its validation test targets the same full-passage hypothesis and uses the budget $b_j=\lfloor\eta|B_j|\rfloor$, since all watermarked positions allowed under the null could lie in the validation half. e-process restart reports complete crossing episodes, while Raw-p restart reports every candidate selected from unsplit token-level $p$-values. Neither restart baseline provides the same proportion guarantee. Alarms are at least 64 tokens apart, without imposing this minimum on localized passage lengths. Appendix~\ref{app:implementation} gives the implementation details. Figure~\ref{fig:watermark-screening} illustrates the resulting \method{} procedure on a mixed-watermark Mistral excerpt.

\begin{figure}[!htb]
\centering\includegraphics[width=0.95\linewidth]{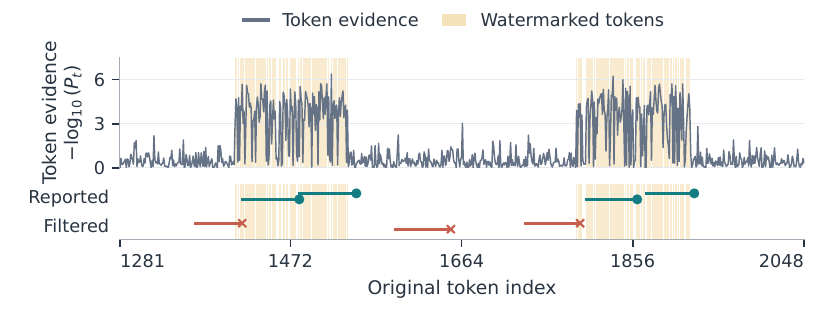}
\caption{\textbf{Candidate passages and validation within mixed watermark regions.} An illustrative Mistral-7B excerpt at $\eta=0.25$, containing two 128-token regions with 75\% watermarked positions. Shading marks the actual watermarked tokens. All seven candidates in the displayed range are shown: four nonnull candidates are reported and three proportion-null candidates are filtered.}\label{fig:watermark-screening}
\end{figure}

We report empirical FDR, empirical power estimating \eqref{eq:conditional-power-target}, and token coverage and purity of the report union. ESR instead measures restart success on complete nonnull episodes. Coverage counts watermarked tokens recovered by all reports, whereas qualified coverage, Cov.$^+$, counts those covered by reports whose watermark proportion exceeds $\eta$. Purity is the watermark proportion of the report union. Appendix~\ref{app:metric-conventions} gives the formula definitions.

\subsection{Controlled simulations and threshold sensitivity}\label{sec:synthetic}
The simulations separate evidence strength from the positions carrying it. At watermarked positions, the token-level $p$-values follow the exact uniform-vocabulary Gumbel-max law $P_t\sim\operatorname{Beta}(1,K)$, with $K=16$ in the main table, while all other token-level $p$-values are independent uniforms. Each deployment contains four regions with 75\% watermarked positions. The four layouts use the same region lengths, spacing, and number of watermarked tokens. Table~\ref{tab:synthetic} compares $\eta=0$, $0.25$ and $0.5$ on the same 200 independent deployments per layout. Appendix~\ref{app:matched-layouts} specifies the layouts and gives $\eta=0.1$ and the $K=64$ strength comparison.

\begin{table}[!htb]\centering
\caption{\textbf{Controlled online reporting at fixed signal strength, $K=16$, $\alpha=0.1$.} Each layout has 200 independent deployments with four 512-token regions, each 75\% marked. Thresholds share source streams and randomization. Dashes denote inapplicable metrics or undefined purity. Appendix~\ref{app:matched-layouts} gives the complete design and additional comparisons. Bold marks the highest displayed Power, ESR, coverage and purity among methods with FDR $\le\alpha$.}\label{tab:synthetic}
\fontsize{8}{9.5}\selectfont
\setlength{\tabcolsep}{0.6pt}
\renewcommand{\arraystretch}{1.06}
\begin{tabular*}{\linewidth}{@{\extracolsep{\fill}}l@{\hspace{4pt}}*{4}{r}@{\hspace{4pt}}*{4}{r}@{\hspace{4pt}}*{4}{r}@{}}\toprule
 & \multicolumn{4}{c}{\textbf{$\eta=0$}} & \multicolumn{4}{c}{\textbf{$\eta=0.25$}} & \multicolumn{4}{c}{\textbf{$\eta=0.5$}}\\
\cmidrule(lr){2-5}\cmidrule(lr){6-9}\cmidrule(lr){10-13}
Method & FDR & \shortstack{Power/\\ESR} & Cov. & Purity & FDR & \shortstack{Power/\\ESR} & Cov. & Purity & FDR & \shortstack{Power/\\ESR} & Cov. & Purity\\\midrule
\multicolumn{13}{c}{\textit{Contiguous}}\\[2pt]
\textbf{FissionGuard} & 0.0436 & \textbf{0.974}/-- & 0.986 & 0.872 & 0.0002 & \textbf{0.931}/-- & \textbf{0.966} & 0.952 & 0.0000 & \textbf{0.849}/-- & \textbf{0.871} & \textbf{0.983}\\[1pt]
Splitting & 0.0393 & \textbf{0.974}/-- & 0.994 & \textbf{0.879} & 0.0000 & 0.890/-- & 0.955 & \textbf{0.964} & 0.0000 & 0.000/-- & 0.000 & --\\[1pt]
e-process & 0.0039 & --/\textbf{1.000} & \textbf{1.000} & 0.223 & 0.1560 & --/1.000 & 1.000 & 0.223 & 0.1911 & --/1.000 & 1.000 & 0.223\\[1pt]
Raw-p & 0.2188 & --/1.000 & 0.999 & 0.687 & 0.2587 & --/1.000 & 0.999 & 0.687 & 0.3202 & --/1.000 & 0.999 & 0.687\\[1pt]
\addlinespace[3pt]
\multicolumn{13}{c}{\textit{Random}}\\[2pt]
\textbf{FissionGuard} & 0.0354 & \textbf{0.984}/-- & 0.973 & 0.680 & 0.0002 & \textbf{0.906}/-- & \textbf{0.922} & 0.734 & 0.0000 & \textbf{0.002}/-- & \textbf{0.003} & \textbf{0.832}\\[1pt]
Splitting & 0.0325 & 0.979/-- & 0.973 & \textbf{0.681} & 0.0000 & 0.623/-- & 0.660 & \textbf{0.759} & 0.0000 & 0.000/-- & 0.000 & --\\[1pt]
e-process & 0.0021 & --/\textbf{1.000} & \textbf{1.000} & 0.220 & 0.1289 & --/1.000 & 1.000 & 0.220 & 0.2033 & --/1.000 & 1.000 & 0.220\\[1pt]
Raw-p & 0.1744 & --/1.000 & 0.992 & 0.566 & 0.2065 & --/1.000 & 0.992 & 0.566 & 0.2754 & --/1.000 & 0.992 & 0.566\\[1pt]
\addlinespace[3pt]
\multicolumn{13}{c}{\textit{Gapped}}\\[2pt]
\textbf{FissionGuard} & 0.0392 & \textbf{0.970}/-- & 0.985 & 0.805 & 0.0005 & \textbf{0.878}/-- & \textbf{0.942} & 0.906 & 0.0000 & \textbf{0.680}/-- & \textbf{0.726} & \textbf{0.976}\\[1pt]
Splitting & 0.0313 & 0.962/-- & 0.991 & \textbf{0.819} & 0.0000 & 0.795/-- & 0.909 & \textbf{0.930} & 0.0000 & 0.000/-- & 0.000 & --\\[1pt]
e-process & 0.0034 & --/\textbf{1.000} & \textbf{1.000} & 0.222 & 0.2635 & --/1.000 & 1.000 & 0.222 & 0.2767 & --/1.000 & 1.000 & 0.222\\[1pt]
Raw-p & 0.1943 & --/1.000 & 0.999 & 0.647 & 0.2687 & --/1.000 & 0.999 & 0.647 & 0.3753 & --/1.000 & 0.999 & 0.647\\[1pt]
\addlinespace[3pt]
\multicolumn{13}{c}{\textit{Alternating}}\\[2pt]
\textbf{FissionGuard} & 0.0394 & \textbf{0.984}/-- & 0.965 & 0.707 & 0.0004 & \textbf{0.723}/-- & \textbf{0.803} & 0.871 & 0.0000 & \textbf{0.781}/-- & \textbf{0.622} & \textbf{0.998}\\[1pt]
Splitting & 0.0367 & 0.980/-- & 0.969 & \textbf{0.711} & 0.0000 & 0.522/-- & 0.686 & \textbf{0.985} & 0.0000 & 0.000/-- & 0.000 & --\\[1pt]
e-process & 0.0016 & --/\textbf{0.988} & \textbf{0.998} & 0.222 & 0.1556 & --/1.000 & 0.998 & 0.222 & 0.4974 & --/1.000 & 0.998 & 0.222\\[1pt]
Raw-p & 0.1736 & --/1.000 & 0.998 & 0.570 & 0.2170 & --/1.000 & 0.998 & 0.570 & 0.6075 & --/1.000 & 0.998 & 0.570\\[1pt]
\bottomrule\end{tabular*}
\end{table}

At $\eta=0.25$, Table~\ref{tab:synthetic} shows that \method{} has higher power and recovers more watermarked tokens than Splitting, with the largest coverage advantages on Random and Alternating and comparable coverage on Contiguous. Both methods have low empirical FDR in these runs, while purity is similar on most layouts. Since the layouts have the same number of watermarked tokens and the same evidence law, these differences show how the arrangement of watermarked positions affects recovery.

At $\eta=0$, \method{} and Splitting recover similar amounts of watermarked content. Their differences become more pronounced when the reports must satisfy a positive proportion threshold. Splitting must accommodate the full-passage null budget using only its validation half, whereas \method{} retains validation values at every position. At $\eta=0.5$, \method{} retains substantial recovery on the block-based layouts but rarely reports on Random at $K=16$, while Splitting produces no reports under its full-passage validation budget. Stronger evidence at $K=64$ substantially improves recovery on Random at this threshold (Table~\ref{tab:synthetic-strength}). Both methods retain low empirical FDR across the tested thresholds. Appendix~\ref{app:synthetic-grid} examines shorter regions and different watermark proportions.

\method{} also achieves higher purity than both restart methods while keeping empirical FDR below the target. Both restart methods recover almost all watermarked tokens, but their FDR exceeds the target across all four layouts at $\eta=0.25$. Raw-p exceeds the FDR target even for presence, whereas e-process restart has low FDR there. The restart methods retain the same reports as $\eta$ increases, but more of those reports fail the stronger proportion claim. Fixed-interval experiments further support calibration of the partial-conjunction tests when the proportion null permits strong evidence at watermarked positions (Appendix~\ref{app:calibration}).

\subsection{Experiments on model-generated text}\label{sec:private-real}
We next compare the four layouts using Mistral-7B-Instruct-v0.3 and Qwen3-8B, with 100 original 4,096-token texts per model--layout group. Contiguous contains fully marked passages with varying lengths or gaps, whereas the other layouts contain four internally mixed regions, each exactly 75\% watermarked. Labels indicate whether the target watermark decoder was used at each position. Appendix Figure~\ref{fig:four-layouts} illustrates the layouts and the resulting reported and filtered passages. All methods are evaluated on the same texts. Generation settings and metric aggregation are described in Appendices~\ref{app:model-protocol} and~\ref{app:metric-conventions}, respectively.

\begin{table}[!htb]\centering
\caption{\textbf{Watermarked text across four layouts, $\alpha=0.1$, $\eta=0.25$.} Each model--layout group contains 100 original texts, at temperature 1.5 for Mistral and 2.0 for Qwen. Bold marks the highest displayed Power, ESR, coverage and purity among methods with FDR $\le\alpha$ (ties included; Power and ESR compared separately).}\label{tab:model-generated}
\small
\setlength{\tabcolsep}{2pt}
\renewcommand{\arraystretch}{1.06}
\begin{tabular*}{\linewidth}{@{\extracolsep{\fill}}l@{\hspace{4pt}}*{4}{r}@{\hspace{4pt}}*{4}{r}@{}}\toprule
 & \multicolumn{4}{c}{\textbf{Mistral-7B}} & \multicolumn{4}{c}{\textbf{Qwen3-8B}}\\
\cmidrule(lr){2-5}\cmidrule(lr){6-9}
Method & FDR & Power/ESR & Cov. & Purity & FDR & Power/ESR & Cov. & Purity\\\midrule
\multicolumn{9}{c}{\textit{Contiguous}}\\[2pt]
\textbf{FissionGuard} & 0.0040 & \textbf{0.939}/-- & \textbf{0.970} & 0.950 & 0.0065 & \textbf{0.971}/-- & \textbf{0.978} & 0.947\\[1pt]
Splitting & 0.0000 & 0.911/-- & 0.961 & \textbf{0.963} & 0.0000 & 0.940/-- & 0.966 & \textbf{0.966}\\[1pt]
e-process & 0.1564 & --/1.000 & 1.000 & 0.621 & 0.1572 & --/1.000 & 1.000 & 0.619\\[1pt]
Raw-p & 0.1760 & --/1.000 & 0.997 & 0.826 & 0.1883 & --/1.000 & 0.998 & 0.817\\[1pt]
\addlinespace[3pt]
\multicolumn{9}{c}{\textit{Random}}\\[2pt]
\textbf{FissionGuard} & 0.0057 & \textbf{0.948}/-- & \textbf{0.953} & 0.711 & 0.0035 & \textbf{0.959}/-- & \textbf{0.956} & 0.717\\[1pt]
Splitting & 0.0000 & 0.845/-- & 0.877 & \textbf{0.730} & 0.0000 & 0.897/-- & 0.911 & \textbf{0.728}\\[1pt]
e-process & 0.2713 & --/1.000 & 1.000 & 0.203 & 0.2711 & --/1.000 & 1.000 & 0.204\\[1pt]
Raw-p & 0.3568 & --/1.000 & 0.997 & 0.518 & 0.3677 & --/1.000 & 0.997 & 0.518\\[1pt]
\addlinespace[3pt]
\multicolumn{9}{c}{\textit{Gapped}}\\[2pt]
\textbf{FissionGuard} & 0.0017 & \textbf{0.888}/-- & \textbf{0.934} & 0.790 & 0.0039 & \textbf{0.937}/-- & \textbf{0.949} & 0.788\\[1pt]
Splitting & 0.0000 & 0.743/-- & 0.850 & \textbf{0.870} & 0.0000 & 0.782/-- & 0.861 & \textbf{0.877}\\[1pt]
e-process & 0.2714 & --/1.000 & 1.000 & 0.204 & 0.2698 & --/1.000 & 1.000 & 0.204\\[1pt]
Raw-p & 0.3450 & --/1.000 & 0.999 & 0.561 & 0.3605 & --/1.000 & 1.000 & 0.557\\[1pt]
\addlinespace[3pt]
\multicolumn{9}{c}{\textit{Alternating}}\\[2pt]
\textbf{FissionGuard} & 0.0068 & \textbf{0.941}/-- & \textbf{0.956} & 0.707 & 0.0039 & \textbf{0.951}/-- & \textbf{0.958} & 0.716\\[1pt]
Splitting & 0.0000 & 0.898/-- & 0.938 & \textbf{0.720} & 0.0000 & 0.937/-- & 0.953 & \textbf{0.722}\\[1pt]
e-process & 0.2700 & --/1.000 & 1.000 & 0.204 & 0.2704 & --/1.000 & 1.000 & 0.203\\[1pt]
Raw-p & 0.3504 & --/1.000 & 0.999 & 0.525 & 0.3704 & --/1.000 & 0.999 & 0.506\\[1pt]
\bottomrule\end{tabular*}
\end{table}
Table~\ref{tab:model-generated} shows that \method{} has higher power and token coverage than Splitting across both models and all four layouts, while maintaining empirical FDR well below the target. The observed coverage gains are larger on Random and Gapped, while both methods already recover most watermarked tokens on Contiguous and Alternating. The gains persist in qualified coverage (Table~\ref{tab:qualified-coverage}), which is close to total coverage for \method{} across all eight model--layout groups. Thus, most recovered watermarked tokens are covered by reports whose watermark proportions exceed $\eta$. Purity is also close to that of Splitting in most settings and substantially higher than that of both restart methods.

In contrast, the restart methods' near complete recovery on the Random, Gapped and Alternating settings comes with FDR above the target and substantially lower purity. Their reported passages therefore include a much larger proportion of unwatermarked tokens. The gains of \method{} in power and token coverage are obtained with the same alarm, localization and testing settings across both models and all four layouts. Additional controls and temperature comparisons are in Appendix~\ref{app:real-extension}.

\section{Discussion}

\method{} tests watermark proportions in candidate passages chosen as tokens arrive. Data fission allows every token to contribute to alarms, localization and validation, while partial-conjunction testing accounts for the watermarked positions allowed under the null. Under the stated conditions on the watermark pseudorandom variables and fission noise, the procedure controls FDR among reported passages for a fixed proportion threshold at every fixed finite horizon. We also derive a finite-sample lower bound on power among nonnull candidates.

Experiments on synthetic streams and text generated by Mistral and Qwen show higher power and token coverage than Splitting at the default proportion threshold, with similar purity in most settings and empirical FDR below the target. Further work can study the fission allocation together with the alarm and localization rules.

\label{end:main}

\bibliographystyle{plainnat}
\bibliography{referencesv2}
\clearpage
\appendix

\section{Validity of tests for selected passages}\label{app:proofs}
This appendix supplies the argument for the joint distribution of the token-level $p$-values and the proofs behind Section~\ref{sec:method}. The joint distribution of the token-level p-values supports validation after passage selection even when a passage satisfying the proportion null contains watermarked coordinates.

\subsection{Proof of Proposition~\ref{thm:factorization}}\label{app:watermark-input}

Several components of Assumption~\ref{ass:interface} are also used in existing watermark models. \citet[Working Hypotheses~2.1--2.2]{li2025statistical} assume independent pseudorandom variables, independence of the current variable from preceding text, and conditional independence between a human-generated token and its current pseudorandom variable given the joint past. In mixed-source text, \citet[Assumption~4.1(a)--(b)]{blanchet2026localization} likewise assume fresh pseudorandomness and conditional independence between the token and the current pseudorandom variable when watermark dependence is absent. At the pivot level, \citet[Assumption~1(i)]{su2026watermark} assume that each null pivot is uniform and independent of the preceding information.

The following generation model is sufficient for Assumption~\ref{ass:interface}. It is stronger than Assumption~\ref{ass:interface} and is also used for the Gumbel-max power analysis.

\begin{assumption}[Generation with independent pseudorandom variables]\label{ass:watermark-generation}
The variables $\zeta_t$ are mutually independent across token positions and requests, have the distributions specified by the watermarking scheme, and are independent of all non-key generation randomness. At each position, token generation depends only on the preceding text, non-key generation randomness and, at watermarked positions, the current $\zeta_t$. The watermark randomness at unwatermarked positions is not used by current or later generation and is not reused at other positions. Generation does not use the monitor's values or decisions.
\end{assumption}

To verify Assumption~\ref{ass:interface} under this model, condition on all non-key generation randomness and $(\zeta_t)_{t\in\mathcal M}$. The complete text is then fixed recursively. The remaining pseudorandom variables $(\zeta_t)_{t\in\mathcal N}$ retain their independent scheme-specified distributions, which do not depend on the quantities held fixed. They are therefore independent of the complete text and the watermarked pseudorandom variables, proving~\eqref{eq:unused-randomness}. The proof below uses only Assumption~\ref{ass:interface}.

\begin{proof}

We first show that Assumption~\ref{ass:interface}, together with the null distribution for each fixed token in Section~\ref{sec:token-evidence}, implies
\begin{equation}
(P_t)_{t\in\mathcal N}\overset{\mathrm{iid}}{\sim}\Unif(0,1),
\qquad
P_{\mathcal N}\perp P_{\mathcal M}.
\label{eq:unwatermarked-input-law}
\end{equation}
Throughout this subsection, watermark locations and request boundaries are fixed.

Define
\[
\mathcal G=\sigma\!\left(W_{1:\infty},(\zeta_t)_{t\in\mathcal M}\right).
\]
Assumption~\ref{ass:interface} gives $(\zeta_t)_{t\in\mathcal N}\perp\mathcal G$. These pseudorandom variables remain mutually independent with their scheme-specified distributions conditional on $\mathcal G$. The complete text and the token-level $p$-values $P_{\mathcal M}$ at watermarked positions are $\mathcal G$-measurable.

For each $t\in\mathcal N$, conditioning on $\mathcal G$ fixes $W_t$ but leaves $\zeta_t$ with its original distribution. The null distribution for each fixed token in Section~\ref{sec:token-evidence} thus gives
\[
\PP(Y_t\le y\mid\mathcal G)=F_0(y),
\qquad
\PP(P_t\le u\mid\mathcal G)=u,
\qquad 0\le u\le1.
\]
The second equality follows from the scheme-specific probability integral transform. For Gumbel-max, $P_t=1-U_{t,W_t}$, whereas for inverse-transform watermarking, $P_t=F_0(Y_t)$ uses the continuous null distribution given in Appendix~\ref{app:power}. In either case, the conditional distribution is the same for every fixed token.

Conditional on $\mathcal G$, each unwatermarked $P_t$ depends only on its own $\zeta_t$ and the fixed token $W_t$. Hence, for every finite $I\subseteq\mathcal N$ and deterministic thresholds $(u_t)_{t\in I}\in[0,1]^I$,
\begin{equation}
\PP\!\left(\bigcap_{t\in I}\{P_t\le u_t\}\,\middle|\,\mathcal G\right)
=
\prod_{t\in I}\PP(P_t\le u_t\mid\mathcal G)
=
\prod_{t\in I}u_t
\quad\text{almost surely}.
\label{eq:conditional-unwatermarked-law}
\end{equation}

The right-hand side does not depend on $\mathcal G$. For any event $A\in\mathcal G$, taking expectations therefore yields
\[
\PP\!\left(A\cap\bigcap_{t\in I}\{P_t\le u_t\}\right)
=\E\!\left[\1_A\prod_{t\in I}u_t\right]
=\PP(A)\prod_{t\in I}u_t.
\]
Taking $A$ to be the whole sample space establishes the independent uniform law for every finite subvector of $P_{\mathcal N}$. Allowing arbitrary $A\in\mathcal G$ establishes its independence from $\mathcal G$. Since finite-coordinate events generate the sigma-field of the sequence and $\sigma(P_{\mathcal M})\subseteq\mathcal G$, this proves \eqref{eq:unwatermarked-input-law}. Text tokens may remain dependent, as may the token-level $p$-values at watermarked positions.

For $t\in\mathcal N$, $\Phi^{-1}(P_t)$ and $V_t$ are independent standard normals. The matrix taking them to the two normal scores in~\eqref{eq:fission} is orthogonal. The scores are therefore independent standard normals, and their probability transforms are independent uniforms. Equation~\eqref{eq:unwatermarked-input-law} and independence of all fission noises imply that these unwatermarked pairs are mutually independent and independent of the complete sequence of alarm and validation pairs at watermarked positions. Hence all unwatermarked validation values are independent of all alarm values and of watermarked validation values. Conditioning on these sequences preserves the independent uniform distribution of the unwatermarked validation values. The statement for infinite sequences follows from its finite-dimensional versions by a monotone-class argument.
\end{proof}

The proof of Proposition~\ref{thm:factorization} uses the joint law in~\eqref{eq:unwatermarked-input-law}. Marginal super-uniformity alone does not provide this joint law.

\subsection{Online error control}
\begin{proof}[Proof of Theorem~\ref{cor:online-fdr}]
Fix the external design, complete alarm sequence, and all watermarked validation values. Alarm times, selected starts, candidate passages, truth labels, and the number of candidates by a fixed token horizon are then fixed. A true $H_{0j}^\eta$ contains at least $k_j=|B_j|-b_j$ unwatermarked coordinates. Select the first $k_j$ such coordinates in the deterministic token order and call this subset $I_0$. Under Proposition~\ref{thm:factorization}, $P_{I_0}^{\sf val}$ is iid uniform under this conditioning. Consequently, for $0\le u\le1$,
\[
\PP(Q_j^\eta\le u\mid P^{\sf alm},P_{\mathcal M}^{\sf val})
\le\PP\{\mathcal Q_{I_0}(P_{I_0}^{\sf val})\le u\mid P^{\sf alm},P_{\mathcal M}^{\sf val}\}\le u.
\]
The first inequality uses the maximum over all size-$k_j$ subsets. It does not assume the retained largest values themselves are iid uniform.

All watermarked validation values are constants after conditioning. Each $Q_j^\eta$ is thus a measurable function of unwatermarked validation coordinates on its own passage and fixed constants. Candidate passages are disjoint because the detector restarts after either testing outcome and each candidate passage uses only tokens observed since the most recent restart. These functions are mutually independent, including those corresponding to false nulls. Conditioning additionally on prior validation $p$-values preserves calibration of the selected tests under their null hypotheses, proving~\eqref{eq:selected-calibration}.

Apply the FDR theorem for independent $p$-values of the chosen online procedure to this fixed conditional sequence, with its fixed null labels and fixed candidate horizon. For SAFFRON, \citet[Theorem~1]{ramdas2018saffron} give the required FDR result under independence and monotonicity, whereas conditional super-uniformity alone yields a different mFDR result. The monotone LORD~1 and LORD~2 rules satisfy the FDR theorem for independent $p$-values of \citet[Theorem~3.1]{javanmard2018online}, and LORD++ satisfies the monotone generalized alpha-investing++ guarantee of \citet[Theorem~1]{ramdas2017decaying}. These statements use the prescribed initial wealth and reward constraints. The conditional expected FDP is at most $\alpha$. Integrating over the conditioned values and external design proves~\eqref{eq:alarm-fdr}. A zero-candidate sequence contributes zero.
\end{proof}

The guarantee allows arbitrary watermark strength, including within a true proportion null, and applies separately to each prespecified $\eta$ and fixed finite horizon. Each $\eta$ uses its own online testing history. Selecting the threshold or final horizon from the data, or merging reports from many thresholds, requires a separate analysis.

\section{Power bounds: proofs and watermark models}\label{app:power}\label{app:decoder-power}
This appendix gives the complete bounds underlying Section~\ref{sec:power}. We first control failure to reject on a deterministic interval, then account for selection and derive the Gumbel-max moment bounds. Appendix~\ref{app:theory-diagnostic} evaluates the finite-sample bound, and Appendix~\ref{app:inverse-power} treats inverse-transform evidence.

\subsection{Trimmed sums and the counting argument}\label{app:power-main-proof}
Throughout this appendix, probabilities and expectations condition on the fixed marking and request design. All unqualified sums over candidate indices $j$ are restricted to $\tau_j\le T$, and $z_u=\Phi^{-1}(1-u)$ for $0<u<1$. For a deterministic nonempty interval $B$, let $Q_B^\eta$ denote the $p$-value for the partial-conjunction test based on Stouffer's method, computed from the validation coordinates whether or not $B$ is selected.

\begin{lemma}[A bound for trimmed sums of normal scores]\label{lem:confirmation-tail}
For a deterministic nonempty interval $B$, put $n=|B|$, $b=\lfloor\eta n\rfloor$, and $Z_t=-\Phi^{-1}(P_t^{\sf val})$. For $0<u<1$,
\begin{equation}
\PP\{Q_B^\eta>u\}\le
\inf_{c\in\mathbb R,\,\theta\ge0}
\exp\!\left\{\theta\bigl(\sqrt{n-b}\,\Phi^{-1}(1-u)+bc\bigr)\right\}
\E\left[\exp\!\left\{-\theta\sum_{t\in B}\min(Z_t,c)\right\}\right].
\label{eq:confirmation-tail}
\end{equation}
\end{lemma}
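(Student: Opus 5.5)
We first make the Stouffer-based partial-conjunction $p$-value explicit. For a subset $I$ of size $k=n-b$, Stouffer's rule gives $\mathcal Q_I(P_I^{\sf val})=1-\Phi\bigl(k^{-1/2}\sum_{t\in I}Z_t\bigr)$. Since $1-\Phi$ is decreasing, the maximum in~\eqref{eq:confirmation-interface} is attained at the subset with the smallest sum of normal scores. That subset is obtained by deleting the $b$ largest $Z_t$. Writing $Z_{(1)}\ge\cdots\ge Z_{(n)}$ for the order statistics on $B$, we get
\[
Q_B^\eta=1-\Phi\!\left(\frac{S_B}{\sqrt{n-b}}\right),
\qquad
S_B=\sum_{i=b+1}^{n}Z_{(i)} .
\]
Hence $\{Q_B^\eta>u\}=\{S_B<\sqrt{n-b}\,z_u\}$ with $z_u=\Phi^{-1}(1-u)$. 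Ties and the boundary case have probability zero, or are handled by monotonicity, because $P^{\sf val}$ has a continuous law at every coordinate: fission adds independent Gaussian noise. The event therefore lies in $\{S_B\le\sqrt{n-b}\,z_u\}$.

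The key deterministic step is a capping inequality. For every $c\in\mathbb R$,
\[
S_B\ \ge\ \sum_{t\in B}\min(Z_t,c)-bc .
\]
To prove it, we split $\sum_t\min(Z_t,c)$ into the $b$ largest coordinates and the rest. The top $b$ terms each satisfy $\min(Z_{(i)},c)\le c$, so they contribute at most $bc$. Each remaining term satisfies $\min(Z_{(i)},c)\le Z_{(i)}$, so the rest contributes at most $S_B$. This gives $\sum_t\min(Z_t,c)\le bc+S_B$. One could worry that sorting and capping interact, but the inequality needs only the pointwise bounds $\min(x,c)\le x$ and $\min(x,c)\le c$, so no matching between ordered and capped values is required. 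It follows that
\[
\{Q_B^\eta>u\}\subseteq\Bigl\{\sum_{t\in B}\min(Z_t,c)\le\sqrt{n-b}\,z_u+bc\Bigr\}.
\]

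Finally, we apply a Chernoff bound. For $\theta\ge0$, Markov's inequality applied to $\exp\{-\theta\sum_t\min(Z_t,c)\}$ bounds the probability of the last event by
\[
\exp\{\theta(\sqrt{n-b}\,z_u+bc)\}\,\E\exp\Bigl\{-\theta\sum_{t\in B}\min(Z_t,c)\Bigr\}.
\]
The $\theta=0$ case gives the trivial bound $1$, so it is harmless. Both $c$ and $\theta$ are arbitrary, so we take the infimum to obtain~\eqref{eq:confirmation-tail}.

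The bound requires no independence assumption: the expectation is left unevaluated, and later lemmas handle it using the Gumbel-max model. Capping at $c$ is what keeps the moment generating function finite and controls the strong watermarked scores that the trimming removes. The main care point is the first step: confirming that the maximizing subset is the one that drops the top $b$ scores, and that $k=n-b\ge1$ because $\eta<1$.
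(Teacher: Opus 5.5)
Your proposal is correct and follows the paper's proof essentially step for step. You identify $Q_B^\eta$ as the Stouffer value of the $n-b$ smallest scores, bound that trimmed sum below by $\sum_{t\in B}\min(Z_t,c)-bc$ for every deterministic $c$, then apply exponential Markov and take the infimum over $(c,\theta)$. The paper additionally shows that this lower bound is attained for some $c$, giving a variational identity for the trimmed sum, but only the inequality you prove is needed, so your continuity argument for ties is likewise unnecessary.
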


\begin{proof}
Write $k=n-b$. Order the validation scores as $Z_{(1)}\le\cdots\le Z_{(n)}$ and put $S_B^\eta=\sum_{i=1}^k Z_{(i)}$. The maximum over retained subsets in~\eqref{eq:confirmation-interface} retains the $k$ smallest scores, so
\[
Q_B^\eta=\Phi(-S_B^\eta/\sqrt{k}).
\]
For every real score vector,
\begin{equation}
S_B^\eta=\sup_{c\in\mathbb R}
\left\{\sum_{t\in B}\min(Z_t,c)-bc\right\}.
\label{eq:trimmed-variational}
\end{equation}
Indeed, for each $c\in\mathbb R$,
\[
\sum_{t\in B}\min(Z_t,c)-bc =\sum_{i=1}^{k}\min(Z_{(i)},c)
 +\sum_{i=k+1}^{n}\{\min(Z_{(i)},c)-c\}\le\sum_{i=1}^{k}Z_{(i)}.
\]
If $b>0$, equality holds for $c\in[Z_{(k)},Z_{(k+1)}]$. If $b=0$, equality holds for every $c\ge Z_{(n)}$. These statements also cover ties, proving~\eqref{eq:trimmed-variational}.

For every deterministic $c$, the variational representation implies
\[
\{Q_B^\eta>u\}
=\{S_B^\eta<\sqrt{k}z_u\}
\subseteq
\left\{\sum_{t\in B}\min(Z_t,c)<\sqrt{k}z_u+bc\right\}.
\]
Applying Markov's inequality to the exponential of the negative sum gives, for $\theta>0$,
\[
\PP\{Q_B^\eta>u\}
\le
e^{\theta(\sqrt{k}z_u+bc)}
\E\left[\exp\!\left\{-\theta\sum_{t\in B}\min(Z_t,c)\right\}\right].
\]
The same inequality holds for $\theta=0$, when its right side is one. Taking the infimum over deterministic $c$ and $\theta$ proves~\eqref{eq:confirmation-tail}. Expectations may be interpreted in the extended sense, but the choice $\theta=0$ is always finite.

To evaluate the expectation, let $\mathcal F_{t-1}$ contain the preceding generation and fission history. Suppose deterministic finite bounds $M_t(c,\theta)>0$ satisfy
\[
\E[e^{-\theta\min(Z_t,c)}\mid\mathcal F_{t-1}]
\le M_t(c,\theta),
\qquad c\in\mathbb R,\quad\theta\ge0,
\qquad M_t(c,0)=1.
\]
The exponential factors from all earlier positions are $\mathcal F_{t-1}$-measurable. Starting with the final position of $B$ and applying the tower property successively therefore yields
\[
\E\left[\exp\!\left\{-\theta\sum_{t\in B}\min(Z_t,c)\right\}\right]
\le\prod_{t\in B}M_t(c,\theta).
\]
Consequently,
\begin{equation}
\PP\{Q_B^\eta>u\}\le\varepsilon_B(u):=
\inf_{c\in\mathbb R,\,\theta\ge0}
\exp\!\left\{\theta(\sqrt{k}z_u+bc)
 +\sum_{t\in B}\log M_t(c,\theta)\right\}.
\label{eq:conditional-cap-tail}
\end{equation}
This argument allows dependence among watermarked scores through the preceding history. Each deterministic pair $(c,\theta)$ bounds the same failure event, so optimizing these analysis parameters preserves the bound. Appendix~\ref{app:gumbel-power} derives suitable conditional moment bounds from the Gumbel-max NTP probabilities.
\end{proof}

\begin{proof}[Proof of Theorem~\ref{thm:main-power}]
For this proof and the subsequent discussion, write
\[
\Gamma_\eta(T)=\E\left[\sum_jD_j^\eta\right]>0,
\qquad
L_\eta(T)=\E\left[\sum_jD_j^\eta\1\{B_j\notin\mathcal B_T\}\right].
\]
These are the expected number of nonnull candidates through $T$ and the expected number of such candidates outside $\mathcal B_T$, respectively. A nonnull candidate is not reported only if $Q_j^\eta>\alpha_j\ge u$. Moreover, disjoint nonempty candidate passages ensure that each deterministic interval is selected at most once. Thus, on every realization,
\[
\sum_jD_j^\eta(1-R_j)
\le
\sum_jD_j^\eta\1\{B_j\notin\mathcal B_T\}
+\sum_{B\in\mathcal B_T}\1\{Q_B^\eta>u\}.
\]
For a candidate inside $\mathcal B_T$, a failure contributes to the corresponding deterministic-interval event on the right. For a candidate outside $\mathcal B_T$, its contribution is bounded by the first sum. The value $Q_B^\eta$ is defined from the validation coordinates for every deterministic $B$, including those not selected on that realization.

Taking expectations gives
\[
\E\left[\sum_jD_j^\eta(1-R_j)\right]
\le L_\eta(T)+\sum_{B\in\mathcal B_T}\PP\{Q_B^\eta>u\}.
\]
Since
\[
\mathrm{Pow}_\eta(T)
=1-\frac{\E\left[\sum_jD_j^\eta(1-R_j)\right]}{\Gamma_\eta(T)}
\quad\text{and}\quad \mathrm{Pow}_\eta(T)\ge0,
\]
division by $\Gamma_\eta(T)>0$ proves~\eqref{eq:main-power}. Substituting~\eqref{eq:conditional-cap-tail} also gives the computable form
\[
\mathrm{Pow}_\eta(T)\ge
\left[1-\frac{L_\eta(T)+\sum_{B\in\mathcal B_T}\varepsilon_B(u)}
{\Gamma_\eta(T)}\right]_+,
\]
which is evaluated in Appendix~\ref{app:theory-diagnostic}.
\end{proof}

\paragraph{SAFFRON testing levels and the length requirement.}
In~\eqref{eq:saffron}, $1\le j-C_{1:j-1}\le j$, the sequence $(\gamma_j)$ is nonincreasing, and all reward terms are nonnegative. Thus
\begin{equation}
\alpha_j\ge\min\{\lambda,(1-\lambda)w_0\gamma_j\}.
\label{eq:level-floor}
\end{equation}
A deterministic candidate bound $J_{\max}$ gives the floor $u=\min\{\lambda,(1-\lambda)w_0\gamma_{J_{\max}}\}$. Disjoint nonempty candidates imply $J_{\max}\le T$. With the experimental schedule $\gamma_j=j^{-1.6}/\zeta(1.6)$, we may take
\[
u_T=\min\{\lambda,(1-\lambda)w_0\gamma_T\},
\qquad \log(1/u_T)=1.6\log T+O(1).
\]
Thus testing levels have a polynomial lower bound, and $n_{\min}\ge C'\log(eT)$ suffices for the length requirement in Corollary~\ref{cor:exponential-power} when $C'$ is sufficiently large. The proof of the corollary below gives explicit sufficient conditions for power to tend to one. The experimental selector has a fixed maximum length of 256 tokens, so these experiments concern finite horizons rather than asymptotic consistency. Appendix~\ref{app:theory-diagnostic} bounds $\Gamma_\eta(T)$ from the model, while observed counts are used only for empirical power.

\subsection{Gumbel-max: token laws and exponential bounds}\label{app:gumbel-power}
We first derive the watermarked token law and a deterministic bound for its conditional capped moments, then prove Corollary~\ref{cor:exponential-power}. Watermark locations and request boundaries are fixed, and generation satisfies Assumption~\ref{ass:watermark-generation}. The NTP probabilities are predictable with respect to the preceding generation history, and the current pseudorandom variable $\zeta_t$ and fission noise are independent of the past. Write $\phi$ for the standard normal density.

For a probability vector $\pi$, let $(U_v)_{v\in\mathcal V}$ be independent uniform variables. The Gumbel-max decoder selects
\[
W=\underset{v:\pi_v>0}{\arg\max}\,\frac{\log U_v}{\pi_v},
\qquad Y=U_W,\qquad P=1-Y.
\]
The variables $-\log U_v/\pi_v$ are independent exponentials with rates $\pi_v$. Since these rates sum to one, their minimum $E$ has the $\operatorname{Exp}(1)$ distribution and is independent of the winning token $W$, whose distribution is $\pi$. Moreover, $U_W=e^{-\pi_WE}$. It follows that, for $0<p<1$,
\begin{equation}
\PP(P\le p)=1-\sum_{v:\pi_v>0}\pi_v(1-p)^{1/\pi_v},\qquad
f_\pi^G(p)=\sum_{v:\pi_v>0}(1-p)^{1/\pi_v-1}.
\label{eq:gumbel-density}
\end{equation}
The watermarked distribution of the Gumbel-max pivotal statistic \citep{li2025statistical} gives this $p$-value law. In particular, a uniform probability vector on $K$ tokens gives $P\sim\operatorname{Beta}(1,K)$.

Suppose $\max_v\pi_v\le1-\delta$ for some $0<\delta<1$. Put $a_\delta=1/(1-\delta)$ and let $P_\delta$ have density
\[
g_\delta(p)=a_\delta(1-p)^{a_\delta-1},\qquad 0<p<1.
\]
Since $1/\pi_v\ge a_\delta$ whenever $\pi_v>0$, equation~\eqref{eq:gumbel-density} gives
\[
\PP(P\le p)\ge1-(1-p)^{a_\delta}=\PP(P_\delta\le p).
\]
Thus $P$ is stochastically no larger than $P_\delta$, and $P_\delta$ is stochastically no larger than a uniform variable. This comparison holds conditionally on the past whenever the predictable vector $\pi_t$ satisfies the same cap. The reference variable $P_\delta$ is defined for every $\delta\in(0,1)$. A probability vector on a vocabulary of size $d$ can satisfy the cap only when $d(1-\delta)\ge1$.

Let $G\sim N(0,1)$ be independent of $P_\delta$, and define
\[
Z_\delta=-\sqrt{1-\rho}\,\Phi^{-1}(P_\delta)+\sqrt\rho\,G.
\]
Independent Gaussian fission noise preserves stochastic order. Consequently, the conditional distribution of a watermarked validation score $Z_t=-\Phi^{-1}(P_t^{\sf val})$ stochastically dominates the distribution of $Z_\delta$, which in turn stochastically dominates $N(0,1)$. For $c\in\mathbb R$ and $\theta\ge0$, the function $z\mapsto e^{-\theta\min(z,c)}$ is nonincreasing. The conditional moment bounds in~\eqref{eq:conditional-cap-tail} may therefore be taken as
\begin{equation}
M_0(c,\theta)=\E\bigl[ e^{-\theta\min(G,c)}\bigr],\qquad
M_\delta(c,\theta)=\E\bigl[ e^{-\theta\min(Z_\delta,c)}\bigr]\le M_0(c,\theta),
\label{eq:cap-envelopes}
\end{equation}
at unwatermarked and watermarked positions, respectively. The unwatermarked moment is exact. Splitting its Gaussian integral at $c$ gives
\begin{equation}
M_0(c,\theta)=e^{\theta^2/2}\Phi(c+\theta)+e^{-\theta c}\Phi(-c).
\label{eq:normal-cap-moment}
\end{equation}
For the watermarked moment, write $a_\rho(p)=-\sqrt{1-\rho}\,\Phi^{-1}(p)$. The corresponding calculation for a $N(a,\rho)$ variable yields
\begin{align}
H_\rho(c,\theta;a)
&=e^{-\theta a+\rho\theta^2/2}\Phi\!\left(\frac{c-a+\rho\theta}{\sqrt\rho}\right)
  +e^{-\theta c}\Phi\!\left(\frac{a-c}{\sqrt\rho}\right),
\label{eq:cap-kernel}\\
M_\delta(c,\theta)
&=\int_0^1g_\delta(p)H_\rho(c,\theta;a_\rho(p))\,dp.
\label{eq:gumbel-cap-moment}
\end{align}
For the uniform-$K$ decoder, $\delta=1-1/K$ makes $g_\delta$ the actual watermarked density.

Let a deterministic interval $B$ contain $n$ positions, of which $m$ are watermarked, and put $b=\lfloor\eta n\rfloor$ and $k=n-b$. Successive conditional expectations, followed by~\eqref{eq:conditional-cap-tail}, give
\begin{equation}
\begin{aligned}
\PP\{Q_B^\eta>u\}&\le\varepsilon_B(u)\\
&:=\inf_{c\in\mathbb R,\,\theta\ge0}
\exp\!\left\{\theta(\sqrt{k}z_u+bc)
 +(n-m)\log M_0(c,\theta)+m\log M_\delta(c,\theta)\right\}.
\end{aligned}
\label{eq:homogeneous-cap-tail}
\end{equation}
This bound applies to predictable NTP probabilities through the conditional moment bounds above.

The moments are finite and differentiable in a neighborhood of $\theta=0$. Indeed, $g_\delta\le a_\delta$, so $-\Phi^{-1}(P_\delta)$ has density bounded by $a_\delta\phi$. Scaling and convolution with the independent Gaussian noise show that $Z_\delta$ also has density bounded by $a_\delta\phi$. Thus $\E\bigl[ e^{q|Z_\delta|}\bigr]<\infty$ for every $q>0$, which justifies differentiation under the capped-moment expectations. In particular,
\[
M_0(c,0)=M_\delta(c,0)=1,
\]
and
\[
\left.\frac{\partial}{\partial\theta}\log M_0(c,\theta)\right|_{\theta=0}
=-\E[\min(G,c)],\qquad
\left.\frac{\partial}{\partial\theta}\log M_\delta(c,\theta)\right|_{\theta=0}
=-\E[\min(Z_\delta,c)].
\]

\phantomsection\label{app:power-statements}
\paragraph{Conditions for Corollary~\ref{cor:exponential-power}.}
In the setting of Theorem~\ref{thm:main-power}, use the partial-conjunction test based on Stouffer's method in~\eqref{eq:pc-stouffer}. Assume Gumbel-max generation satisfying Assumption~\ref{ass:watermark-generation}, with predictable NTP probabilities and independent fission noise.
Fix $0\le\eta<\kappa\le1$ and $0<\rho<1$. There exists $\delta_*\in[0,1)$, depending only on $(\rho,\kappa,\eta)$, with the following property. For each fixed $\delta\in(\delta_*,1)$ such that $\max_v\pi_{t,v}\le1-\delta$ almost surely at every watermarked position, there are constants $C,c_1>0$ depending only on $(\delta,\rho,\kappa,\eta)$. If every $B\in\mathcal B_T$ has length at least $n_{\min}$ and watermark proportion at least $\kappa$, and $n_{\min}\ge C\{1+\log(1/u)\}$, then
\begin{equation}
\sum_{B\in\mathcal B_T}\PP\{Q_B^\eta>u\}
\le |\mathcal B_T|e^{-c_1n_{\min}}.
\label{eq:exponential-power}
\end{equation}
Substituting this bound into~\eqref{eq:main-power} gives~\eqref{eq:exponential-power-bound}. For $\eta=0$, one can take $\delta_*=0$.
For $\eta>0$, sufficient signal must remain after trimming. Appendix~\ref{app:gumbel-power} gives the criterion and a weak-signal example. A vocabulary of size $d$ requires $\delta\le1-1/d$.

\begin{proof}[Proof of Corollary~\ref{cor:exponential-power}]
For fixed $0\le\eta<\kappa\le1$ and $0<\rho<1$, define
\begin{equation}
r(c)=(1-\kappa)\E[\min(G,c)]+\kappa\E[\min(Z_\delta,c)]-\eta c.
\label{eq:trimmed-signal}
\end{equation}
We first show that $r(c)>0$ for some deterministic $c>0$ whenever the probability cap is sufficiently strong. For every fixed $c>0$,
\[
\E[\min(G,c)]=c\Phi(-c)-\phi(c).
\]
Also, if $A_\delta=-\Phi^{-1}(P_\delta)$, then
\[
\PP(A_\delta\le x)=\Phi(x)^{1/(1-\delta)}.
\]
Thus $A_\delta\to+\infty$ in probability as $\delta\uparrow1$, and the same holds for $Z_\delta$. Since $Z_\delta$ stochastically dominates a standard normal variable,
\[
0\le c-\E[\min(Z_\delta,c)]
=\int_{-\infty}^c\PP(Z_\delta\le x)\,dx,
\qquad
\PP(Z_\delta\le x)\le\Phi(x).
\]
The function $\Phi$ is integrable on $(-\infty,c]$, so dominated convergence gives $\E[\min(Z_\delta,c)]\to c$. Consequently,
\[
r(c)\longrightarrow(\kappa-\eta)c+(1-\kappa)\E[\min(G,c)].
\]
For $c>0$, $\E[\min(G,c)]\ge-\phi(0)$. Taking
\[
c=1+\frac{(1-\kappa)\phi(0)}{\kappa-\eta}
\]
makes the last limit at least $\kappa-\eta>0$. The reference scores increase stochastically with $\delta$. It follows that there exists $\delta_*\in[0,1)$, depending only on $(\rho,\kappa,\eta)$, such that this choice of $c$ has $r(c)>0$ for every $\delta>\delta_*$. For a vocabulary of size $d$, the applicable caps satisfy $\delta\in(\delta_*,1-1/d]$. This feasible range is nonempty precisely when $\delta_*<1-1/d$.

When $\eta=0$, each fixed $\delta>0$ suffices. The distribution of $Z_\delta$ strictly stochastically dominates $N(0,1)$ and has a finite first moment, so $\E[ Z_\delta]>0$. Letting $c\to\infty$ in~\eqref{eq:trimmed-signal} gives
\[
r(c)\longrightarrow\kappa\E[ Z_\delta]>0.
\]
We may therefore take $\delta_*=0$ in this case and choose $c>0$ after fixing $\delta$.

Now fix an admissible $\delta$ and a deterministic $c>0$ for which $r(c)>0$. Put
\[
\Lambda_c(\theta)=\theta\eta c
 +(1-\kappa)\log M_0(c,\theta)+\kappa\log M_\delta(c,\theta).
\]
The moment derivatives above imply $\Lambda_c(0)=0$ and $\Lambda_c'(0)=-r(c)<0$. Hence there exist $\theta_0>0$ and $c_0>0$ such that $\Lambda_c(\theta_0)=-c_0$. If $m/n\ge\kappa$, then $M_\delta\le M_0$ and $bc\le\eta nc$ give
\begin{align*}
&\theta_0(\sqrt{k}z_u+bc)
 +(n-m)\log M_0(c,\theta_0)+m\log M_\delta(c,\theta_0)\\
&\hspace{2em}\le\theta_0\sqrt{k}z_u+n\Lambda_c(\theta_0)
\le\theta_0\sqrt n\,[z_u]_+-c_0n.
\end{align*}
The last expression is at most $-c_0n/2$ whenever
\[
n\ge\left(\frac{2\theta_0[z_u]_+}{c_0}\right)^2.
\]
Since $[z_u]_+^2\le2\log(1/u)$, it is enough to take
\[
C=1+\frac{8\theta_0^2}{c_0^2},\qquad c_1=\frac{c_0}{2},
\]
and require $n\ge C\{1+\log(1/u)\}$. Substitution into~\eqref{eq:homogeneous-cap-tail} proves
\begin{equation}
\PP\{Q_B^\eta>u\}\le e^{-c_1|B|}.
\label{eq:interval-exponential-tail}
\end{equation}
The choices of $C$ and $c_1$ depend only on $(\delta,\rho,\kappa,\eta)$. Summing~\eqref{eq:interval-exponential-tail} over $B\in\mathcal B_T$ and using $|B|\ge n_{\min}$ proves~\eqref{eq:exponential-power}. Combining this conclusion with Theorem~\ref{thm:main-power} gives the corresponding bound on power among nonnull candidates.

For the asymptotic assertion, consider configurations indexed by $T$, with $(\delta,\rho,\kappa,\eta)$ fixed and the minimum candidate length allowed to depend on $T$. Suppose $u_T\ge aT^{-\beta}$ and $\Gamma_\eta(T)\ge g_0$ for fixed $a,\beta,g_0>0$ and all sufficiently large $T$. Assume almost surely that every nonnull candidate through $T$ has watermark proportion at least $\kappa$ and length at least $n_{\min}(T)$, where
\[
n_{\min}(T)\ge A\log(eT),
\qquad A>\max\{\beta C,\,2/c_1\}.
\]
Take $\mathcal B_T$ to contain all intervals in $\{1,\ldots,T\}$ with length at least $n_{\min}(T)$ and watermark proportion at least $\kappa$. This family is deterministic under the fixed marking design, contains every nonnull candidate almost surely, and has at most $T(T+1)/2$ members. Hence $L_\eta(T)=0$. Since $\log(1/u_T)\le\beta\log T+\log(1/a)$ and $A>\beta C$, the length requirement holds for all sufficiently large $T$. Equation~\eqref{eq:exponential-power-bound} then gives
\[
0\le1-\mathrm{Pow}_\eta(T)
\le\frac{T(T+1)}{2g_0}e^{-c_1n_{\min}(T)}
=O(T^{2-c_1A})\longrightarrow0.
\]
For the SAFFRON schedule above, $\beta=1.6$, so any $A>\max\{1.6C,\,2/c_1\}$ suffices.
\end{proof}

\paragraph{A condition for positive signal after trimming.}
The proof applies at any fixed $\delta>0$ for which $r(c)>0$ for some $c>0$. This condition has a direct interpretation. Let $F_\kappa$ be the distribution function of the mixture
\[
(1-\kappa)\mathcal L(G)+\kappa\mathcal L(Z_\delta).
\]
For $\eta>0$, differentiation gives $r'(c)=1-F_\kappa(c)-\eta$. The maximum is therefore attained at the $(1-\eta)$ quantile. At this quantile, the contribution $c\{1-F_\kappa(c)-\eta\}$ is zero, and integration gives
\begin{equation}
\sup_{c\in\mathbb R}r(c)
=\int_0^{1-\eta}F_\kappa^{-1}(v)\,dv.
\label{eq:trimmed-signal-integral}
\end{equation}
For $\eta=0$, the same identity follows by letting $c\to\infty$. If $c\le0$, then $r(c)\le(1-\eta)c\le0$, so restricting attention to $c>0$ excludes no positive value. Thus~\eqref{eq:trimmed-signal} requires a positive mean contribution after removing the largest $\eta$ fraction of the reference scores. The stochastic comparison then supplies the same sufficient condition for the actual NTP probabilities.

\paragraph{Small proportion thresholds.}
For fixed $\delta,\rho\in(0,1)$ and $\kappa\in(0,1]$, choose $c>0$ sufficiently large that $s=(1-\kappa)\E[\min(G,c)]+\kappa\E[\min(Z_\delta,c)]>0$. Then every $0\le\eta<\min\{\kappa/2,s/(2c)\}$ gives $r(c)=s-\eta c\ge s/2>0$, so the exponential argument applies.

\paragraph{A weak-signal example.}
The inequalities $\kappa>\eta$ and $\delta>0$ alone do not ensure high power for the partial-conjunction test based on Stouffer's method when $\eta>0$. Fix $\rho\in(0,1)$ and $\eta\in(0,1)$, mark every position, and use the two-token Gumbel-max decoder $\pi=(1-\delta,\delta)$ with $0<\delta<1/2$ and independent watermark pseudorandom variables across token positions. The resulting watermarked $p$-value density is
\[
f_\delta(p)=(1-p)^{\delta/(1-\delta)}+(1-p)^{1/\delta-1}.
\]
As $\delta\downarrow0$, this density converges almost everywhere to one and is bounded by two. The validation-score distributions therefore converge to $N(0,1)$, and their densities are bounded by $2\phi$. This bound also gives uniformly integrable absolute moments. Their quantile functions consequently converge in $L^1(0,1)$, so their lower $(1-\eta)$ trimmed integrals converge to
\[
\int_0^{1-\eta}\Phi^{-1}(v)\,dv
=-\phi\{\Phi^{-1}(1-\eta)\}<0.
\]
Choose a sufficiently small fixed $\delta>0$ for which the corresponding trimmed integral is negative, and denote the resulting validation-score distribution function by $F_\delta^{\rm act}$. These validation scores are iid. For the prefix intervals $B_n=\{1,\ldots,n\}$, let $\widehat F_n$ be their empirical distribution function, put $b_n=\lfloor\eta n\rfloor$, and let $S_{B_n}^\eta$ be the sum of the $n-b_n$ smallest validation scores. The finite first moment implies almost-sure $L^1(0,1)$ convergence of the empirical quantile functions. Hence
\[
\frac{S_{B_n}^\eta}{n}
=\int_0^{1-b_n/n}\widehat F_n^{-1}(v)\,dv
\longrightarrow\int_0^{1-\eta}(F_\delta^{\rm act})^{-1}(v)\,dv<0
\qquad\text{almost surely}.
\]
At any fixed level $u\in(0,1)$, or for levels $0<u_n\le\bar u<1$ satisfying $\log(1/u_n)=o(n)$, the normalized rejection threshold obeys
\[
\frac{\sqrt{n-b_n}\,\Phi^{-1}(1-u_n)}{n}\longrightarrow0.
\]
The rejection probability of this partial-conjunction test therefore tends to zero in this model. The example concerns this test and does not rule out other tests of the proportion null.

\paragraph{Alternative bound over retained subsets.}
For the partial-conjunction test based on Stouffer's method, retain the notation $n=|B|$, $m=|B\cap\mathcal M|$, $b=\lfloor\eta n\rfloor$, $k=n-b$, and $z_u=\Phi^{-1}(1-u)$. An upper bound on the untrimmed watermarked log moment is:
\begin{equation}
\psi(\theta)=\frac{\rho\theta^2}{2}+\log\!\int_0^1g_\delta(p)e^{\theta\sqrt{1-\rho}\Phi^{-1}(p)}\,dp.
\label{eq:gumbel-mgf}
\end{equation}
Let $\psi_t(\theta)$ be $\theta^2/2$ at unwatermarked positions and $\psi(\theta)$ at watermarked positions. For each fixed retained subset $I$ of size $k$, exponential Markov and successive conditional expectations bound its nonrejection probability by $\exp\{\theta\sqrt{k}z_u+\sum_{t\in I}\psi_t(\theta)\}$. The event $Q_B^\eta>u$ is the union of these subset events. Using a common $\theta$ and grouping subsets by their retained watermarked count $a$ gives the separate valid bound
\begin{equation}
\begin{aligned}
\PP\{Q_B^\eta>u\}\le\varepsilon_B^{\rm UB}(u):=
\min\!\Big\{1,\inf_{\theta\ge0}&e^{\theta\sqrt{k}z_u}\\
{}\times&\sum_{a=\max(0,k-n+m)}^{\min(m,k)}\binom{m}{a}\binom{n-m}{k-a}
e^{a\psi(\theta)+(k-a)\theta^2/2}\Big\}.
\end{aligned}
\label{eq:homogeneous-tail}
\end{equation}
Optimizing each subset separately can only improve this union bound. With all positions watermarked, the common-parameter sum is $\binom{n}{b}e^{k\psi(\theta)}$. At $\eta=0$, there is one subset, yielding the presence Chernoff bound, and the capped bound recovers the same expression by letting $c\to\infty$. Either bound, or their pointwise minimum, may be used in Theorem~\ref{thm:main-power}. No uniform ordering of their optimized values is asserted. Appendix~\ref{app:theory-diagnostic} compares conservative numerical evaluations of both.

\subsection{Inverse transform: finite vocabulary}\label{app:inverse-power}
For the inverse decoder, let $\sigma$ be a uniform random token permutation, independent of $U\sim\Unif(0,1)$. For a vocabulary of size $d\ge2$, write $v_1,\ldots,v_d$ for the tokens in permutation order, put $c_{\sigma,k}=\sum_{l\le k}\pi_{v_l}$ with $c_{\sigma,0}=0$, and set $b_k=(k-1)/(d-1)$. The decoder chooses $W=v_k$ when $c_{\sigma,k-1}<U\le c_{\sigma,k}$. Its pivotal statistic is $Y=|U-b_{\sigma(W)}|$, where $\sigma(W)$ denotes the rank. An unwatermarked token has uniform rank independent of $U$. For $0\le y\le1$, the exact null and watermarked distributions are
\begin{align}
F_{0,d}(y)&=d^{-1}\sum_{k=1}^d[\min(1,b_k+y)-\max(0,b_k-y)],\label{eq:inverse-null}\\
F_{\pi,d}^Y(y)&=(d!)^{-1}\sum_\sigma\sum_{k=1}^d
 |(c_{\sigma,k-1},c_{\sigma,k}]\cap[b_k-y,b_k+y]|.\label{eq:inverse-alt}
\end{align}
These distribution functions follow by integrating over the values of $U$ for which the pivotal statistic is at most $y$, as in \citet[Lemma 4.1]{li2025statistical}. The null distribution is continuous and strictly increasing on $[0,1]$, so $P=F_{0,d}(Y)$ is uniform under the null. Using $U$ alone would discard the evidence from inverse-transform watermarking. The density of the watermarked $p$-value is
\[
f_{\pi,d}^I(p)=\frac{(F_{\pi,d}^Y)'(F_{0,d}^{-1}(p))}{F_{0,d}'(F_{0,d}^{-1}(p))}\quad\text{almost everywhere}.
\]
The null derivative is at least $2/d$, and the watermarked derivative is at most $2d$, yielding $0\le f_{\pi,d}^I\le d^2$. For any declared nonempty finite-vocabulary class $\mathcal C$, an upper bound on the capped moment is
\begin{equation}
M_{\mathcal C}^I(c,\theta)=\sup_{\pi\in\mathcal C}\int_0^1 f_{\pi,d}^I(p)
 H_\rho(c,\theta;a_\rho(p))\,dp.
\label{eq:inverse-cap-moment}
\end{equation}
The density bound and Gaussian exponential integrability make this finite, with $M_{\mathcal C}^I(c,0)=1$. A predictable $\pi_t\in\mathcal C$ therefore satisfies the conditional moment bound in Lemma~\ref{lem:confirmation-tail}. It uses the exact distribution functions in~\eqref{eq:inverse-null}--\eqref{eq:inverse-alt}. No finite-vocabulary stochastic ordering is assumed. The alternative untrimmed moment bound for~\eqref{eq:homogeneous-tail} is obtained from
\[
\psi_{\mathcal C}^I(\theta)=\rho\theta^2/2+\log\sup_{\pi\in\mathcal C}
\int_0^1f_{\pi,d}^I(p)e^{\theta\sqrt{1-\rho}\Phi^{-1}(p)}\,dp.
\]
Assume inverse-transform generation satisfying Assumption~\ref{ass:watermark-generation}, with predictable NTP probabilities and independent fission noise. Fix $d\ge2$, $\rho\in(0,1)$, $0\le\eta<\kappa\le1$, and a nonempty class $\mathcal C$, independently of $T$, such that $\pi_t\in\mathcal C$ almost surely at every watermarked position. Suppose deterministic constants $c,\theta_0,c_0>0$, independent of $T$, satisfy
\begin{equation}
\theta_0\eta c+(1-\kappa)\log M_0(c,\theta_0)+\kappa\log M_{\mathcal C}^I(c,\theta_0)\le-c_0.
\label{eq:inverse-exponential-condition}
\end{equation}
Since $M_0(c,\theta_0)\ge e^{\theta_0^2/2}>1$, this condition implies $M_{\mathcal C}^I(c,\theta_0)<M_0(c,\theta_0)$. Thus, for every deterministic passage $B$ of length $n$ and watermark proportion at least $\kappa$, equation~\eqref{eq:conditional-cap-tail} gives $\PP\{Q_B^\eta>u\}\le\exp\{\theta_0\sqrt n\,[z_u]_+-c_0n\}$. Taking $C=1+8\theta_0^2/c_0^2$ and $c_1=c_0/2$, we obtain $\PP\{Q_B^\eta>u\}\le e^{-c_1n}$ whenever $n\ge C\{1+\log(1/u)\}$. If every $B\in\mathcal B_T$ has watermark proportion at least $\kappa$ and length at least $n_{\min}\ge C\{1+\log(1/u)\}$, summing this bound and applying Theorem~\ref{thm:main-power} gives~\eqref{eq:exponential-power-bound}. If all nonnull candidates have watermark proportion at least $\kappa$, the same asymptotic conditions on candidate lengths, testing levels and expected nonnull candidate counts used in the proof of Corollary~\ref{cor:exponential-power} then yield $\mathrm{Pow}_\eta(T)\to1$.

\section{Implementation and comparison protocols}\label{app:implementation}
The comparisons use alarms based on Stouffer's method, with one fission allocation and a separate SAFFRON history for each deployment and proportion threshold. Comparator passages follow the reporting rules below.

\subsection{Alarm, localization, and testing history}
For a deterministic nonempty subset $I$, Stouffer's method \citep{stouffer1949american} gives the one-sided combined $p$-value
\begin{equation}
\mathcal Q_I((p_t)_{t\in I})=\Phi\!\left(|I|^{-1/2}\sum_{t\in I}\Phi^{-1}(p_t)\right).\label{eq:stouffer}
\end{equation}
Under the joint null that all positions in $I$ are unwatermarked, the input $p$-values are independent uniforms under the stated model. At a fixed level $u$, the corresponding one-sided Stouffer combination test rejects this joint null when $\mathcal Q_I\le u$.
With $\rho=1/2$, the alarm rule scans suffixes of 64, 128 and 256 raw tokens and uses the localization criterion below. All suffixes must use only tokens observed since the most recent restart. The alarm threshold of 0.01 selects candidates. It is distinct from the level $\alpha_j$ assigned to the final proportion test. Settings are fixed across layouts, signal strengths, $\eta$, models and temperatures within each comparison. The detector restarts after each alarm, regardless of the testing decision, and at each new request. An alarm is eligible only after at least 64 tokens have been processed since the most recent restart. This does not impose a 64-token minimum on the localized passage.

At an alarm, admissible starts give passage lengths of 32--256 raw tokens. A working density family $f_\mu$ can rank starts using its profile log likelihood relative to the uniform null. For the positive shift family of $-\Phi^{-1}(p)$, this reduces to
\begin{equation}
\ell_j(s)=\sup_{\mu\ge0}\sum_{t=s}^{\tau_j}\log\frac{f_\mu(P_t^{\sf alm})}{f_0(P_t^{\sf alm})}
=\frac{\left(-\sum_{t=s}^{\tau_j}\Phi^{-1}(P_t^{\sf alm})\right)_+^2}{2(\tau_j-s+1)}.\label{eq:profile-likelihood}
\end{equation}
The implementation maximizes the corresponding positive standardized sum and chooses the earliest start in a tie. An alarm supplies an admissible positive-sum suffix, so the maximizers coincide. 

For a common symmetric, coordinatewise nondecreasing combination rule \citep{heller2011discussion}, the maximum in~\eqref{eq:confirmation-interface} is attained by the $|B_j|-b_j$ largest validation values. Our implementation uses Stouffer's method. If $p_{(1)}\le\cdots\le p_{(n)}$ are the validation values on $B_j$, with $n=|B_j|$, then
\begin{equation}
Q_j^\eta=\Phi\!\left(\frac{\sum_{i=b_j+1}^{n}\Phi^{-1}(p_{(i)})}{\sqrt{n-b_j}}\right).
\label{eq:pc-stouffer}
\end{equation}
This substitutes~\eqref{eq:stouffer} into~\eqref{eq:confirmation-interface}. For a deterministic candidate interval $B$, use the same formula with $n=|B|$ and $b=\lfloor\eta n\rfloor$, and denote its value by $Q_B^\eta$.
For a fixed retained size, the combined $p$-value is increasing in each coordinate, so the $n-b$ largest validation values achieve the maximum over subsets. Equivalently, for the normal scores $z=\Phi^{-1}(1-p)$, this removes the $b$ largest scores and sums the remaining smallest ones. At $\eta=0$, it reduces to Stouffer's method applied to all validation values. We test $H_{0j}^\eta$ using $Q_j^\eta$ at the assigned level $\alpha_j$. Budgets are computed by integer arithmetic for the declared rational thresholds, preventing floating-point rounding from changing a boundary label.

The online multiple-testing procedure is SAFFRON \citep{ramdas2018saffron}, with $\lambda=0.5$, $w_0=\alpha/2=0.05$, and $\gamma_k=k^{-1.6}/\sum_{l\ge1}l^{-1.6}$. Let $C_{a:b}=\sum_{i=a}^b\1\{Q_i^\eta\le\lambda\}$, and let $\tau_1^R,\tau_2^R,\ldots$ be preceding rejection indices. Its update is
\begin{align}
\alpha_j=\min\bigg\{\lambda,(1-\lambda)\bigg[&w_0\gamma_{j-C_{1:j-1}}
 +(\alpha-w_0)\gamma_{j-\tau_1^R-C_{\tau_1^R+1:j-1}}\1\{\tau_1^R<j\}\nonumber\\
 &+\alpha\sum_{k\ge2:\tau_k^R<j}\gamma_{j-\tau_k^R-C_{\tau_k^R+1:j-1}}\bigg]\bigg\}.\label{eq:saffron}
\end{align}
Terms involving nonexistent rejections are omitted. The sequence is not renormalized to a finite candidate horizon. Every completed test updates the same history, while external request boundaries reset only the detector, not the online testing history. Different $\eta$ values have separate online testing runs. Under the theorem's independence assumptions, decreasing prior $Q$ values cannot reduce the assigned levels: eligibility counts decrease gamma indices, and new rejections add nonnegative rewards. This is the monotone SAFFRON version used in the FDR theorem.

Splitting uses odd positions for alarms and localization and even positions for validation, with the same raw-token window and start grids as \method{}. Both methods use Stouffer combinations for alarms and partial-conjunction validation and maintain their own persistent SAFFRON histories. Sums normalize by the number of observed coordinates in their own half. If the full passage has length $n$, at most $b=\lfloor\eta n\rfloor$ of its positions may be watermarked under the null, and all could be in validation. Thus a validation half of size $n_{\sf val}$ must delete $b$, not $\lfloor\eta n_{\sf val}\rfloor$, values. If $b\ge n_{\sf val}$, the validation $p$-value is one. The zero reports seen at $\eta=0.5$ in these runs describe this validation rule and the observed passage lengths.

\subsection{e-process restart and finite episodes}\label{app:su}
The e-process restart comparator is the average of the adaptive-log and online Grenander e-processes of \citet{su2026watermark}. Their pivotal statistic is $Y_t$, and our implementation uses the corresponding $p$-value $P_t=1-Y_t$. The adaptive-log component chooses $\beta_t\in[0,1/2]$ from preceding observations to maximize $\sum_{s<t}\log\{1-\beta-\beta\log P_s\}$, with $\beta_1=0$, then multiplies the current factor $1-\beta_t-\beta_t\log P_t$. The Grenander component fits a decreasing density MLE to preceding $p$-values plus one pseudo-observation at 1, retaining multiplicities, and evaluates it at the current value before adding that value to the fitting sample. The two complete processes are averaged, rather than averaging their per-token factors.

Each episode starts at capital one and ends at an eligible crossing of $1/\alpha$ or at the request end, with $\alpha=0.1$. A crossing reports the complete episode and restarts the component processes, which are also reset at the next request. In real-text experiments and the matched synthetic layout study, a crossing is eligible only after 64 tokens, and both component processes update at every token. Terminal episodes without an eligible crossing have decision zero. ESR includes every completed episode whose full interval satisfies $H_1^\eta$.

e-process restart receives the same raw values and request boundaries at all $\eta$. The threshold changes only the null or nonnull labels used for evaluation, never its behavior. Its original episode-level presence guarantee is not a proportion guarantee. Under a single entirely unwatermarked continuous stream, any report occurs if and only if the first e-process has an eligible crossing, so its probability is at most $\alpha$. Across $M$ independent unwatermarked requests, if the within-request crossing probability is $q$, then the deployment any-report probability is $1-(1-q)^M$, which need not be at most $\alpha$. This distinguishes repeated external opportunities from optional stopping within one valid episode.

Raw-p restart applies the same alarm and localization rules to unsplit $p$-values. It reports every candidate passage produced by an alarm, with no validation test or online multiplicity adjustment. Its passages remain fixed across $\eta$. 

\subsection{Interval labels and evaluation}
In the experiments, intervals use zero-based half-open coordinates $[l,r)$, corresponding to token positions $l+1,\ldots,r$. The detectors do not use the activation masks. These labels are used only to evaluate their outputs. The generator uses the mask to activate the target watermark decoder. Each candidate contributes once irrespective of how many watermarked positions or onsets it contains. The interval label uses watermarked count strictly greater than the integer budget. Report unions deduplicate positions within a deployment, and no truth-based deletion changes the actual report sequence.

\subsection{Aggregation and zero-denominator conventions}\label{app:metric-conventions}
A deployment is the complete stream or sequence of requests over which testing history persists. Let $U_d$ be its report union and $\mathcal M_d$ its watermarked positions. We report the mean deployment FDP and pooled power,
\begin{equation}
\widehat\FDR_\eta=\overline\FDP_\eta,\qquad
\widehat{\mathrm{Pow}}_\eta=\frac{\sum_{d,j}D_{dj}^\eta R_{dj}}{\sum_{d,j}D_{dj}^\eta},
\label{eq:empirical-metrics}
\end{equation}
and the token recovery and composition of the report union,
\begin{equation}
\mathrm{Cov}_{\rm tok}=\frac{\sum_d|U_d\cap\mathcal M_d|}{\sum_d|\mathcal M_d|},\qquad
\mathrm{Purity}=\frac{\sum_d|U_d\cap\mathcal M_d|}{\sum_d|U_d|}.
\label{eq:recovery-metrics}
\end{equation}
Power is the empirical counterpart of~\eqref{eq:conditional-power-target}, with a method-specific nonnull candidate denominator. Raw-p restart has no validation stage, so its power is inapplicable. ESR uses complete nonnull restart episodes, including unalarmed terminal episodes.

For an individual deployment with report union $U_d$, define
\begin{equation}
\mathrm{Cov}_d=\frac{|U_d\cap\mathcal M_d|}{|\mathcal M_d|},\qquad
\mathrm{Purity}_d=\frac{|U_d\cap\mathcal M_d|}{|U_d|},\qquad
U_d^+=\bigcup_{j:R_{dj}D_{dj}^\eta=1}B_{dj}.\label{eq:deployment-coverage}
\end{equation}
The first ratio is undefined if the deployment has no watermarked token, and the second is undefined if it has no report. Main tables do not average these individual ratios: they use the pooled counts in~\eqref{eq:recovery-metrics}. Qualified coverage is $\sum_d|U_d^+\cap\mathcal M_d|/\sum_d|\mathcal M_d|$ (Table~\ref{tab:qualified-coverage}). When the pooled nonnull candidate count is zero, power is zero for \method{} and Splitting. Inapplicable power, including that of Raw-p restart, remains dashed, and zero denominators for the other ratios also remain dashed. An unwatermarked deployment adds zero to the coverage numerator and denominator, but its false reports still enter FDR and the pooled purity denominator. A watermarked deployment with no report contributes its watermarked tokens to the coverage denominator. No-report FDP is always zero.

\begin{table}[!htb]\centering\small
\caption{\textbf{Qualified coverage, Cov.$^+$, for Tables~\ref{tab:synthetic} and~\ref{tab:model-generated}.}}\label{tab:qualified-coverage}
\begin{tabular*}{\linewidth}{@{\extracolsep{\fill}}l*{5}{c}@{}}\toprule
& \multicolumn{3}{c}{\textbf{$K=16$}} & \multicolumn{2}{c}{\textbf{Model-generated text}}\\
\cmidrule(lr){2-4}\cmidrule(lr){5-6}
Method & $\eta=0$ & $\eta=0.25$ & $\eta=0.5$ & Mistral-7B & Qwen3-8B\\\midrule
\multicolumn{6}{c}{\textit{Contiguous}}\\[2pt]
\textbf{FissionGuard} & 0.986 & 0.966 & 0.871 & 0.970 & 0.977\\
Splitting & 0.994 & 0.955 & 0.000 & 0.961 & 0.966\\
e-process & 1.000 & 0.891 & 0.874 & 0.985 & 0.989\\
Raw-p & 0.999 & 0.992 & 0.974 & 0.983 & 0.985\\
\addlinespace[3pt]
\multicolumn{6}{c}{\textit{Random}}\\[2pt]
\textbf{FissionGuard} & 0.973 & 0.922 & 0.003 & 0.952 & 0.955\\
Splitting & 0.973 & 0.660 & 0.000 & 0.877 & 0.911\\
e-process & 1.000 & 0.895 & 0.853 & 0.964 & 0.972\\
Raw-p & 0.992 & 0.985 & 0.953 & 0.963 & 0.965\\
\addlinespace[3pt]
\multicolumn{6}{c}{\textit{Gapped}}\\[2pt]
\textbf{FissionGuard} & 0.985 & 0.942 & 0.726 & 0.933 & 0.949\\
Splitting & 0.991 & 0.909 & 0.000 & 0.850 & 0.861\\
e-process & 1.000 & 0.862 & 0.855 & 0.964 & 0.973\\
Raw-p & 0.999 & 0.985 & 0.950 & 0.966 & 0.968\\
\addlinespace[3pt]
\multicolumn{6}{c}{\textit{Alternating}}\\[2pt]
\textbf{FissionGuard} & 0.965 & 0.803 & 0.622 & 0.955 & 0.957\\
Splitting & 0.969 & 0.686 & 0.000 & 0.938 & 0.953\\
e-process & 0.998 & 0.885 & 0.656 & 0.964 & 0.972\\
Raw-p & 0.998 & 0.986 & 0.682 & 0.967 & 0.969\\
\bottomrule\end{tabular*}
\end{table}

In the supplemental tables, $n_1$ counts nonnull candidate passages and $n_{\rm ep}$ counts complete nonnull restart episodes. Counts pool executions. At $\eta=0$, any candidate containing a watermarked position enters the power denominator, even if it contains no region onset. Report unions count each token once.

\paragraph{Restart episode success.}\label{app:eprocess-esr}
For e-process restart and Raw-p restart, let $E_{dj}$ be each complete monitoring episode, from the request start or the token after the previous alarm to the next eligible alarm or request end. Let $S_{dj}=1$ if the episode ends with an eligible e-process threshold crossing or Raw-p scan alarm, respectively, and $S_{dj}=0$ otherwise. All nonempty unalarmed terminal episodes are retained, including those shorter than the minimum scan length. The pooled episode success rate (ESR) is
\[
\mathrm{ESR}_\eta=\frac{\sum_{d,j}S_{dj}\1\{|E_{dj}\cap\mathcal M_d|>\lfloor\eta|E_{dj}|\rfloor\}}{\sum_{d,j}\1\{|E_{dj}\cap\mathcal M_d|>\lfloor\eta|E_{dj}|\rfloor\}}.
\]
Changing $\eta$ changes these truth labels without changing either restart method's episodes or reports. Truth labels use the complete episode $E_{dj}$; for Raw-p, FDR, coverage and purity still use its localized reports. A zero ESR denominator remains dashed. ESR and power among nonnull candidates use different rules to define their candidates. The main tables place the latter before the slash and ESR after it in the Power/ESR column. Every nonterminal episode ends with a report. Failures in ESR arise from unalarmed nonnull terminal episodes. Neither metric measures the fraction of all watermark regions found.
The power bounds in Appendix~\ref{app:power} do not apply to ESR. The supplementary records provide episode and crossing counts for each source group. Raw-p ESR is reported in the main and supplemental tables.

\section{Additional experimental results}\label{app:results}
The results below examine partial-conjunction calibration, unwatermarked controls, additional synthetic settings, and the numerical power bound. Appendix~\ref{app:real-extension} describes the model-generated text. The supplementary records provide the full numerical results and deployment counts.

\subsection{Matched layout comparison}\label{app:matched-layouts}
At watermarked positions, the token-level $p$-values follow $P_t\sim\operatorname{Beta}(1,K)$ for $K\in\{16,64\}$, and all remaining token-level $p$-values are independent uniforms. Proportion labels use each method's actual candidate passage, not the designed region.

Table~\ref{tab:synthetic} compares three proportion thresholds at fixed strength, and the supplementary records give the intermediate threshold $\eta=0.1$. Table~\ref{tab:synthetic-strength} gives results at $K=64$ for the same three thresholds. Within each strength, the threshold evaluations reuse the same 200 independently generated streams per layout and the same fission noise. They represent paired evaluations of 800 source streams, rather than additional independent deployments. \method{} and Splitting recompute their partial-conjunction tests and online testing histories at each threshold. The restart methods retain the same episodes or candidates and change only the null or nonnull labels used for evaluation.

\begin{table}[!htb]\centering
\caption{\textbf{Signal-strength comparison, $K=64$, $\alpha=0.1$, $\eta\in\{0,0.25,0.5\}$.} Each layout--strength cell has 200 independent continuous deployments. Every 11,008-token stream contains four 512-token regions with exactly 384 watermarked tokens each, separated by 256 unwatermarked tokens, with 4,096 unwatermarked tokens on either side. Larger $K$ strengthens the distribution of token-level $p$-values at watermarked positions $\operatorname{Beta}(1,K)$. All methods have 64-token minimum alarm spacing. Dashes and bold follow the conventions in Table~\ref{tab:synthetic}.}\label{tab:synthetic-strength}
\fontsize{8}{9.5}\selectfont
\setlength{\tabcolsep}{0.6pt}
\renewcommand{\arraystretch}{1.06}
\begin{tabular*}{\linewidth}{@{\extracolsep{\fill}}l@{\hspace{4pt}}*{5}{r}@{\hspace{4pt}}*{5}{r}@{\hspace{4pt}}*{5}{r}@{}}\toprule
 & \multicolumn{5}{c}{\textbf{$\eta=0$}} & \multicolumn{5}{c}{\textbf{$\eta=0.25$}} & \multicolumn{5}{c}{\textbf{$\eta=0.5$}}\\
\cmidrule(lr){2-6}\cmidrule(lr){7-11}\cmidrule(lr){12-16}
Method & FDR & \shortstack{Power/\\ESR} & Cov. & Cov.$^+$ & Purity & FDR & \shortstack{Power/\\ESR} & Cov. & Cov.$^+$ & Purity & FDR & \shortstack{Power/\\ESR} & Cov. & Cov.$^+$ & Purity\\\midrule
\multicolumn{16}{c}{\textit{Contiguous}}\\[2pt]
\textbf{FissionGuard} & 0.0443 & \textbf{0.977}/-- & 0.993 & 0.993 & 0.871 & 0.0008 & \textbf{0.944}/-- & \textbf{0.976} & \textbf{0.976} & 0.959 & 0.0004 & \textbf{0.974}/-- & \textbf{0.962} & \textbf{0.962} & \textbf{0.971}\\[1pt]
Splitting & 0.0386 & 0.972/-- & 0.996 & 0.996 & \textbf{0.880} & 0.0000 & 0.908/-- & 0.971 & 0.971 & \textbf{0.970} & 0.0000 & 0.000/-- & 0.000 & 0.000 & --\\[1pt]
e-process & 0.0038 & --/\textbf{1.000} & \textbf{1.000} & \textbf{1.000} & 0.223 & 0.1563 & --/1.000 & 1.000 & 0.930 & 0.223 & 0.1657 & --/1.000 & 1.000 & 0.926 & 0.223\\[1pt]
Raw-p & 0.2190 & --/1.000 & 1.000 & 1.000 & 0.686 & 0.2805 & --/1.000 & 1.000 & 0.990 & 0.686 & 0.3283 & --/1.000 & 1.000 & 0.977 & 0.686\\[1pt]
\addlinespace[3pt]
\multicolumn{16}{c}{\textit{Random}}\\[2pt]
\textbf{FissionGuard} & 0.0351 & \textbf{0.984}/-- & 0.985 & 0.985 & 0.680 & 0.0007 & \textbf{0.954}/-- & \textbf{0.967} & \textbf{0.967} & 0.729 & 0.0000 & \textbf{0.623}/-- & \textbf{0.613} & \textbf{0.613} & \textbf{0.767}\\[1pt]
Splitting & 0.0324 & 0.977/-- & 0.979 & 0.979 & \textbf{0.682} & 0.0000 & 0.840/-- & 0.875 & 0.875 & \textbf{0.747} & 0.0000 & 0.000/-- & 0.000 & 0.000 & --\\[1pt]
e-process & 0.0029 & --/\textbf{1.000} & \textbf{1.000} & \textbf{1.000} & 0.221 & 0.1212 & --/1.000 & 1.000 & 0.934 & 0.221 & 0.1613 & --/1.000 & 1.000 & 0.908 & 0.221\\[1pt]
Raw-p & 0.1746 & --/1.000 & 0.995 & 0.995 & 0.567 & 0.2242 & --/1.000 & 0.995 & 0.986 & 0.567 & 0.2716 & --/1.000 & 0.995 & 0.967 & 0.567\\[1pt]
\addlinespace[3pt]
\multicolumn{16}{c}{\textit{Gapped}}\\[2pt]
\textbf{FissionGuard} & 0.0391 & \textbf{0.974}/-- & 0.992 & 0.992 & 0.805 & 0.0017 & \textbf{0.892}/-- & \textbf{0.956} & \textbf{0.956} & 0.918 & 0.0010 & \textbf{0.946}/-- & \textbf{0.919} & \textbf{0.919} & \textbf{0.944}\\[1pt]
Splitting & 0.0317 & 0.962/-- & 0.994 & 0.994 & \textbf{0.820} & 0.0000 & 0.831/-- & 0.943 & 0.943 & \textbf{0.941} & 0.0000 & 0.000/-- & 0.000 & 0.000 & --\\[1pt]
e-process & 0.0037 & --/\textbf{1.000} & \textbf{1.000} & \textbf{1.000} & 0.222 & 0.2619 & --/1.000 & 1.000 & 0.911 & 0.222 & 0.2689 & --/1.000 & 1.000 & 0.908 & 0.222\\[1pt]
Raw-p & 0.1943 & --/1.000 & 1.000 & 1.000 & 0.648 & 0.3041 & --/1.000 & 1.000 & 0.982 & 0.648 & 0.3902 & --/1.000 & 1.000 & 0.956 & 0.648\\[1pt]
\addlinespace[3pt]
\multicolumn{16}{c}{\textit{Alternating}}\\[2pt]
\textbf{FissionGuard} & 0.0391 & \textbf{0.987}/-- & 0.990 & 0.990 & 0.679 & 0.0015 & \textbf{0.853}/-- & \textbf{0.912} & \textbf{0.911} & 0.782 & 0.0015 & \textbf{0.956}/-- & \textbf{0.673} & \textbf{0.672} & \textbf{0.996}\\[1pt]
Splitting & 0.0356 & 0.980/-- & 0.992 & 0.992 & \textbf{0.686} & 0.0000 & 0.498/-- & 0.684 & 0.684 & \textbf{0.983} & 0.0000 & 0.000/-- & 0.000 & 0.000 & --\\[1pt]
e-process & 0.0025 & --/\textbf{0.999} & \textbf{1.000} & \textbf{1.000} & 0.221 & 0.1435 & --/1.000 & 1.000 & 0.928 & 0.221 & 0.4998 & --/1.000 & 1.000 & 0.685 & 0.221\\[1pt]
Raw-p & 0.1734 & --/1.000 & 1.000 & 1.000 & 0.570 & 0.2237 & --/1.000 & 1.000 & 0.990 & 0.570 & 0.6176 & --/1.000 & 1.000 & 0.678 & 0.570\\[1pt]
\bottomrule\end{tabular*}
\end{table}

Table~\ref{tab:synthetic-strength} matches region length, marking count, evidence law and detector settings across four layouts. Every 11,008-token deployment has four 512-token regions beginning at zero-based positions $4096$, $4864$, $5632$ and $6400$. Consecutive regions are separated by 256 unwatermarked tokens, and the prefix and suffix each contain 4,096 unwatermarked tokens. There are no internal request boundaries. Each region contains 384 watermarked positions. Relative to a region, Contiguous marks $[64,448)$. Random samples 384 of the 512 positions uniformly without replacement. Gapped marks $[0,192)\cup[320,512)$, and Alternating marks $[128k,128k+96)$ for $k=0,1,2,3$. All masks precede and are independent of evidence and fission noise. The deterministic layouts place equal numbers of watermarked positions in the two fixed splitting halves.

Each layout--strength cell contains 200 independently simulated deployments with one fission randomization per deployment. Masks, evidence and noise use separate randomization streams. The two strengths share underlying uniform draws and fission noise within each layout, giving paired strength comparisons. Alarm and localization settings were fixed using separate pilot simulations before evaluation. Results are reported separately for each cell, and evaluations at different proportion thresholds reuse the same source streams.

For every method, an alarm is eligible only after at least 64 tokens have been processed since the most recent restart. The e-process updates on every token before an eligible crossing. No reset occurs at a latent marking transition. Uncrossed terminal episodes remain in its candidate records. Passages localized after an alarm may be shorter than 64 tokens. Appendix~\ref{app:synthetic-grid} uses the same default alarm settings and distinguishes continuous monitoring from experiments with declared external requests. Its two continuous Random cells at watermark proportion $0.75$ reuse the corresponding executions.

\subsection{Partial-conjunction calibration}\label{app:calibration}
The fixed-interval experiment evaluates partial-conjunction $p$-values under the proportion null. It uses lengths $n\in\{32,64,128\}$ and watermarked counts $b-1,b,b+1,b+\lceil0.1n\rceil$, retaining valid distinct counts in $[0,n]$, where $b=\lfloor\eta n\rfloor$. For each count, it considers $K=4$ and $64$, and contiguous marking versus an arrangement that assigns watermarked positions to the Splitting validation half first. Each synthetic cell has 10,000 repetitions at level $0.1$. Table~\ref{tab:boundary} reports the largest observed rejection rate among null cells at level $0.1$, separately for each $(n,\eta)$ and method. The table displays all three lengths. Maxima are taken over valid null watermarked counts, both $K$ values, and both arrangements. These are diagnostics, not simultaneous confidence limits.

A test for watermark presence can reject the null on nearly every window satisfying the proportion null when that window contains the watermarked positions permitted by the proportion null. The partial-conjunction test remains near or below its nominal level. A separate calibration check assigns arbitrarily strong validation evidence to exactly $b$ positions, leaving $n-b$ iid unwatermarked validation scores. At level $0.1$, observed rejection rates range from $0.0943$ to $0.1028$. This directly checks calibration when the allowed watermarked positions provide arbitrarily strong evidence. These fixed-interval experiments evaluate the validation statistic, whereas the online experiments additionally include alarms, localization and SAFFRON.

\begin{table}[!htb]\centering\small\setlength{\tabcolsep}{2.8pt}
\caption{\textbf{Fixed-interval proportion-null calibration at level $0.1$.} Maximum observed rejection rate over valid null watermarked counts, $K=4/64$, and two marking arrangements, with 10,000 repetitions per cell. Pres.: test for watermark presence, FG: \method{} and Split.: Splitting.}\label{tab:boundary}
\begin{tabular*}{\linewidth}{@{\extracolsep{\fill}}r*{9}{r}@{}}
\toprule
 & \multicolumn{3}{c}{$n=32$} & \multicolumn{3}{c}{$n=64$} & \multicolumn{3}{c}{$n=128$}\\
\cmidrule(lr){2-4}\cmidrule(lr){5-7}\cmidrule(l){8-10}
$\eta$ & Pres. & FG & Split. & Pres. & FG & Split. & Pres. & FG & Split.\\\midrule
0.00 & 0.1033 & 0.1033 & 0.1005 & 0.1033 & 0.1033 & 0.1021 & 0.1039 & 0.1039 & 0.1056\\
0.10 & 0.3401 & 0.0581 & 0.0885 & 0.4858 & 0.0458 & 0.0893 & 0.6869 & 0.0340 & 0.0765\\
0.25 & 0.8734 & 0.0341 & 0.0868 & 0.9854 & 0.0172 & 0.0810 & 1.0000 & 0.0087 & 0.0717\\
0.50 & 1.0000 & 0.0162 & 0.0000 & 1.0000 & 0.0068 & 0.0000 & 1.0000 & 0.0016 & 0.0000\\
\bottomrule\end{tabular*}
\end{table}

\subsection{Complete synthetic grid and unwatermarked controls}\label{app:synthetic-grid}
The complete synthetic grid uses the settings in Appendix~\ref{app:implementation} and contains 32 configurations with 7,600 deployments. It varies continuous or request-based monitoring, $K\in\{16,64\}$, watermark proportion $f\in\{0.5,0.75,0.875,1\}$, and region length $L\in\{128,256,512\}$. Four regions each contain exactly $fL$ watermarked positions, chosen uniformly without replacement, with separate comparisons using a contiguous unwatermarked segment or evenly spaced watermarked positions. In the request protocol, each region is one request, with three intervening unwatermarked requests and $H$ on either side, each of length 256. Continuous monitoring removes these boundaries. The base setting uses $L=512$ and $H=16$, and the request-exposure comparison crosses $H\in\{16,64,256\}$ with the three lengths at $f=0.75,K=64$.

Each configuration uses 200 deployments, except the $L=512$, $H\in\{16,256\}$ mixed request configurations at $K=64,f=0.75$ and the four entirely unwatermarked controls, which use 400 each. The two continuous Random configurations at $f=0.75$ reuse Tables~\ref{tab:synthetic} and~\ref{tab:synthetic-strength}. All other token-level $p$-values are independently simulated. The supplementary records give the complete results and candidate passages.

Lower watermark proportions, weaker signals and shorter regions reduce recovery. Denser marking narrows the coverage difference, and Splitting can recover more watermarked tokens in the dense request-based configurations.

\subsection{Numerical evaluation of the power bound}\label{app:theory-diagnostic}
We evaluate Theorem~\ref{thm:main-power} on a controlled selector whose deterministic interval family can be enumerated. Each of the 2,000 deployments per configuration has four requests of length $2N$. At each endpoint, an alarm is raised when the combined alarm $p$-value over the full request, computed using Stouffer's method, is at most $0.001$. The localization rule maximizes the positive standardized alarm sum over starts $s=0,\ldots,N$, giving suffix lengths $N$ through $2N$. The last fraction $f\in\{0.75,1\}$ of each request is watermarked. A suffix of length $n$ contains $m=\min(n,2Nf)$ watermarked positions. All possible candidates satisfy the tested alternative, so the expected number of nonnull candidates outside the family is zero. Validation uses the partial-conjunction statistic and persistent SAFFRON. This endpoint rule differs from the alarm rule used in the main comparisons.

We evaluate all 16 prespecified configurations. Table~\ref{tab:theory} compares the retained-subset union bound~\eqref{eq:homogeneous-tail} and the capped-moment bound~\eqref{eq:homogeneous-cap-tail}, each used in Theorem~\ref{thm:main-power}. At most four candidates occur, giving the deterministic floor $0.025\,4^{-1.6}/\zeta(1.6)>0.00119$, and both evaluations use $u=0.00119$. The expected nonnull candidate count is at least four times the full-request alarm probability, bounded below using the untrimmed presence Chernoff bound at alarm level $0.001$. The four-request factor cancels between numerator and denominator. Observed candidate counts are used only for empirical power.

\paragraph{Conservative moment evaluation.}
For uniform-$K$ Gumbel-max, $A=-\Phi^{-1}(P)$ has distribution function $\Phi(a)^K$ and density at most $K\phi(a)$. The function $a\mapsto H_{1/2}(c,\theta;a/\sqrt2)$ is nonincreasing. On $[-12,12]$, we use bin probabilities computed from the exact CDF and left-endpoint values on 98,304 bins to upper-bound its integral, while right endpoints provide a lower check. Outside this interval, the inequality
\[
H_{1/2}(c,\theta;a/\sqrt2)\le e^{-\theta a/\sqrt2+\theta^2/4}+e^{-\theta c}
\]
and the density bound give explicit Gaussian exponential-tail upper bounds. We add floating-point slack $10^{-10}(1+M)$ to an evaluated moment upper sum $M$. Untrimmed moments use the same construction without the constant capped term. A doubled grid and independent adaptive integration check the moment bounds. These are conservative numerical evaluations, not interval-arithmetic certificates. All reported three-decimal lower bounds are unchanged by grid refinement.

The union evaluation uses $\theta\in\{0.05,0.10,\ldots,4\}$. For the capped evaluation, each configuration uses one fixed pair $(c,\theta)$ for all its possible suffixes, selected from deterministic model calculations and recorded in the supplement. Every such pair yields a valid upper bound, regardless of numerical optimizer convergence. The moment bounds are then recomputed conservatively at that fixed pair. Neither parameter selection nor evaluation reads simulated validation values, selected means, realized levels, or outcomes. The resulting interval failure bounds are summed over all possible suffixes before dividing by the deterministic lower bound on the expected nonnull candidate count.

\begin{table}[!htb]\centering\small\setlength{\tabcolsep}{3.2pt}
\caption{\textbf{Power bounds for the endpoint selector.} Four requests per deployment and 2,000 original deployments per row. $f$ is the full-request watermark proportion. Union and trimmed-sum lower bounds use Theorem~\ref{thm:main-power}. Positive values are rounded down to three decimals and values numerically indistinguishable from one are displayed as $0.999$. Empirical power uses the pooled nonnull candidate count shown in the last column.}\label{tab:theory}
\begin{tabular}{@{}rrrrrrrr@{}}
\toprule $K$ & $N$ & $f$ & $\eta$ & Union bound & Trimmed sum & Empirical power & Nonnull $n$\\\midrule
64 & 32 & 1.00 & 0.25 & 0.994 & 0.999 & 1.0000 & 8000\\
64 & 128 & 1.00 & 0.25 & 0.999 & 0.999 & 1.0000 & 8000\\
64 & 512 & 1.00 & 0.25 & 0.999 & 0.999 & 1.0000 & 8000\\
64 & 32 & 1.00 & 0.50 & 0.000 & 0.000 & 1.0000 & 8000\\
64 & 128 & 1.00 & 0.50 & 0.689 & 0.999 & 1.0000 & 8000\\
64 & 512 & 1.00 & 0.50 & 0.999 & 0.999 & 1.0000 & 8000\\
64 & 32 & 0.75 & 0.25 & 0.000 & 0.991 & 1.0000 & 8000\\
64 & 128 & 0.75 & 0.25 & 0.000 & 0.999 & 1.0000 & 8000\\
64 & 512 & 0.75 & 0.25 & 0.999 & 0.999 & 1.0000 & 8000\\
64 & 32 & 0.75 & 0.50 & 0.000 & 0.000 & 0.9998 & 8000\\
64 & 128 & 0.75 & 0.50 & 0.000 & 0.000 & 1.0000 & 8000\\
64 & 512 & 0.75 & 0.50 & 0.000 & 0.999 & 1.0000 & 8000\\
4 & 128 & 1.00 & 0.25 & 0.000 & 0.000 & 0.9998 & 8000\\
4 & 512 & 1.00 & 0.25 & 0.000 & 0.999 & 1.0000 & 8000\\
64 & 32 & 0.75 & 0.00 & 0.999 & 0.999 & 1.0000 & 8000\\
64 & 128 & 0.75 & 0.00 & 0.999 & 0.999 & 1.0000 & 8000\\
\bottomrule\end{tabular}\end{table}

The capped-moment evaluation gives positive lower bounds in 12 of the 16 configurations, compared with eight for the union evaluation. For $(K,N,f,\eta)=(64,32,0.75,0.25)$ the lower bound changes from $0$ to $0.991$. For $(4,512,1,0.25)$ and $(64,512,0.75,0.5)$ it changes from $0$ to $0.999$ at the displayed precision. Four configurations still have zero evaluated lower bounds despite high empirical power. The capped-score bound avoids the retained-subset factor, while the sum over possible candidate intervals and the conservative SAFFRON floor remain. The supplement contains the fixed parameters, moment bounds and refinement checks, the sums over candidate intervals for all 16 configurations, the empirical counts, and three reference calculations.

\section{Model-generated text}\label{app:real-extension}
Proportion evaluations require token-level $p$-values, per-token target-watermark activation labels, and a declared sequence or request layout. The four-layout corpora contain 1,200 distinct source texts: 600 Contiguous texts across three temperatures and 600 internally mixed texts. Table~\ref{tab:model-generated} uses 800 of these texts, with 100 per model--layout pair at the specified model-specific temperature. The all-temperature Contiguous comparisons are reported separately below. The 300 all-unwatermarked controls are evaluated separately. Source inclusion does not depend on signal strength or detector output. These corpora were used during method development and are not held-out test data.

\subsection{Generation settings and unwatermarked controls}\label{app:model-protocol}
The Contiguous family uses ten prompts, five keys, three temperatures and two layouts, giving 300 source texts per model. Each text has 4,096 tokens and is one request. The duration layout has watermarked lengths $(32,64,128,256,512,1024)$ at zero-based starts $(128,288,480,736,1120,1760)$. The gap layout has seven 192-token watermarked passages starting at $(256,464,688,944,1264,1712,2416)$, with gaps $(16,32,64,128,256,512)$. All remaining positions are unwatermarked. Each model--temperature group has 100 texts and 168,000 watermarked tokens.

\paragraph{Internally mixed continuations.}
Random, Gapped and Alternating use 4,096-token autoregressive continuations with four regions of length $L\in\{128,256\}$, separated by 256 unwatermarked tokens. Region starts are $(1408,1792,2176,2560)$ for $L=128$ and $(1152,1664,2176,2688)$ for $L=256$. Each region has $3L/4$ watermarked positions. Random samples them without replacement. Relative to each region, Gapped leaves $[3L/8,5L/8)$ unwatermarked, and Alternating marks $[kL/4,kL/4+3L/16)$ for $k=0,1,2,3$. All other positions are unwatermarked. Activation masks are fixed independently before generation and define the token labels. Unwatermarked sampling does not consult the current watermark randomness.

The internally mixed generations use temperature 1.5 for Mistral-7B-Instruct-v0.3 and 2.0 for Qwen3-8B, chosen during method development based on earlier observations of watermark evidence strength. Each model has ten prompts and five keys, giving 50 sources per length--layout configuration and 300 overall, including the original 24 pilot texts once. When repetition induces dependence among pivotal statistics \citep{cai2025collision}, goodness-of-fit tests calibrated under an i.i.d.\ null can reject without watermark evidence. We favor higher-temperature generation to reduce repetition. These experiments evaluate watermark detection rather than text quality.

All methods use the same token-level $p$-values and external request boundaries, with one fission randomization per source. FDR averages deployment FDP, while the other metrics pool interval and token counts as in Appendix~\ref{app:metric-conventions}. Uncertainty estimates resample the ten prompt clusters, retaining all keys and randomizations within each cluster.

\paragraph{Key randomness in the implementation.}
The generator hashes a scheme-specific domain, watermark key, token position and preceding four token IDs to seed uniform scores over the vocabulary. Token positions continue through watermarked regions and unwatermarked gaps. Internally mixed generation also includes the source identifier in the domain. Contiguous generation uses a common domain and reuses keys across sources, so identical key, position and context can reproduce a seed. Token sampling at unwatermarked positions, activation masks and fission noise use separate randomization streams. The theoretical analysis models these pseudorandom variables under Assumption~\ref{ass:interface}.

\begin{figure}[!htb]
\centering
\includegraphics[width=\linewidth]{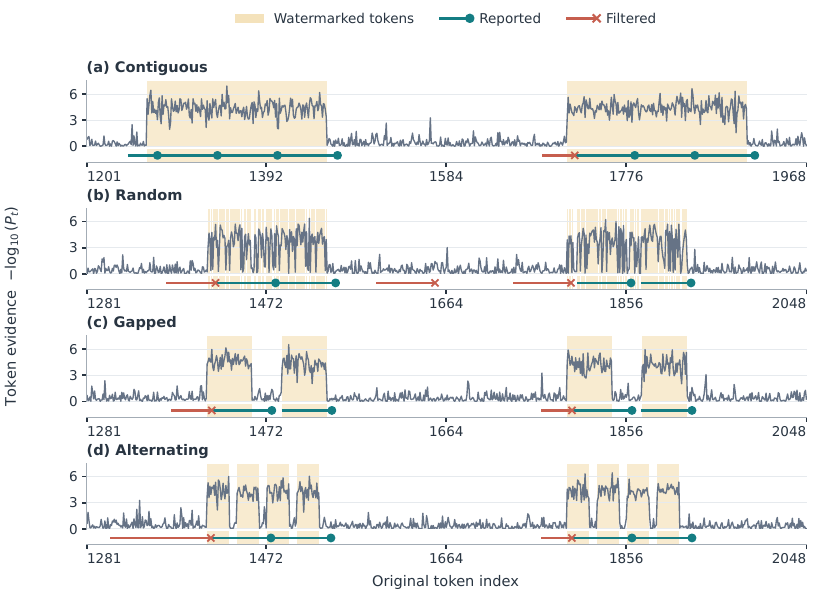}
\caption{\textbf{\method{} across four layouts.} Illustrative Mistral-7B-Instruct-v0.3 windows at temperature $1.5$, $\eta=0.25$ and $\alpha=0.1$, with 64-token minimum alarm spacing. Gray curves show token evidence $-\log_{10}(P_t)$, and amber shading marks watermarked tokens. Teal lines ending in circles denote reported passages, and coral lines ending in crosses denote filtered candidates. Every candidate intersecting each 768-token window is shown, with original token indices. Panel (a) contains fully watermarked 192-token regions. Panels (b)--(d) contain 128-token regions with $75\%$ watermarked positions, arranged randomly, around a central gap, or in alternating blocks. These windows illustrate the layouts, but their different region lengths and watermark proportions preclude a controlled comparison of recovery.}
\label{fig:four-layouts}
\end{figure}

Figure~\ref{fig:watermark-screening} shows tokens 1281--2048 of a Mistral Random text with $L=128$, including all seven candidates in that window. This source is included once in the Mistral results. The supplementary records also give the fixed-window model check on 300 watermarked/unwatermarked Mistral/Qwen source pairs, with 34,200 windows per method at level $0.1$ across lengths, watermarked counts, arrangements and one randomization.

Table~\ref{tab:ordinary} uses 100 entirely unwatermarked texts, with 50 from each model. The original control treats each 4,096-token text as one request. The exploratory replay retains the same token-level $p$-values in their original order and places request boundaries every 64, 128, 256, 512, 1,024, 2,048 or 4,096 tokens. \method{} and Splitting retain SAFFRON history across requests, while e-process restart resets at each boundary, following Appendix~\ref{app:implementation}. FDR is the frequency of any report within a complete deployment. Every report is false, so the restart rates apply to every $\eta$.

\begin{table}[!htb]\centering\small\setlength{\tabcolsep}{2pt}
\caption{\textbf{All-unwatermarked text: original requests and request replay, $\alpha=0.1$.} Each model has 50 original 4,096-token texts, at temperature 1.5 for Mistral and 2.0 for Qwen. Replay rows reuse the main-temperature sources with equally spaced request boundaries, preserving token order and total length. Entries are empirical FDR, equal here to the frequency of any report. \method{} uses one randomization per text.}\label{tab:ordinary}
\begin{tabular*}{\linewidth}{@{\extracolsep{\fill}}r*{9}{r}@{}}\toprule
& \multicolumn{3}{c}{FissionGuard} & \multicolumn{3}{c}{Splitting} & \multicolumn{3}{c}{e-process}\\
\cmidrule(lr){2-4}\cmidrule(lr){5-7}\cmidrule(lr){8-10}
Requests $\times$ tokens & $\eta=0$ & $0.25$ & $0.5$ & $\eta=0$ & $0.25$ & $0.5$ & $\eta=0$ & $0.25$ & $0.5$\\\midrule
\multicolumn{10}{c}{\textit{Mistral-7B}}\\[2pt]
$1\times4096$ & 0.0200 & 0.0000 & 0.0000 & 0.0200 & 0.0000 & 0.0000 & 0.0000 & 0.0000 & 0.0000\\
$2\times2048$ & 0.0200 & 0.0000 & 0.0000 & 0.0200 & 0.0000 & 0.0000 & 0.0200 & 0.0200 & 0.0200\\
$8\times512$ & 0.0200 & 0.0000 & 0.0000 & 0.0000 & 0.0000 & 0.0000 & 0.0800 & 0.0800 & 0.0800\\
$16\times256$ & 0.0000 & 0.0000 & 0.0000 & 0.0200 & 0.0000 & 0.0000 & 0.1400 & 0.1400 & 0.1400\\
$32\times128$ & 0.0000 & 0.0000 & 0.0000 & 0.0000 & 0.0000 & 0.0000 & 0.2600 & 0.2600 & 0.2600\\
\addlinespace[4pt]
\multicolumn{10}{c}{\textit{Qwen3-8B}}\\[2pt]
$1\times4096$ & 0.0400 & 0.0000 & 0.0000 & 0.0600 & 0.0000 & 0.0000 & 0.0000 & 0.0000 & 0.0000\\
$2\times2048$ & 0.0400 & 0.0000 & 0.0000 & 0.0600 & 0.0000 & 0.0000 & 0.0200 & 0.0200 & 0.0200\\
$8\times512$ & 0.0200 & 0.0000 & 0.0000 & 0.0200 & 0.0000 & 0.0000 & 0.1200 & 0.1200 & 0.1200\\
$16\times256$ & 0.0000 & 0.0000 & 0.0000 & 0.0200 & 0.0000 & 0.0000 & 0.2200 & 0.2200 & 0.2200\\
$32\times128$ & 0.0000 & 0.0000 & 0.0000 & 0.0200 & 0.0000 & 0.0000 & 0.3800 & 0.3800 & 0.3800\\
\bottomrule\end{tabular*}
\end{table}

e-process restart has low empirical FDR on the original single requests but exceeds the target at several replay lengths. \method{} and Splitting retain low empirical FDR. 

Exploratory 95\% intervals use 4,000 bootstrap resamples of the ten prompt clusters, keeping all keys, requests and randomizations together. The supplementary records give all six model--temperature groups and source-level counts.

\subsection{Temperature sensitivity}
Table~\ref{tab:temperatures} compares the four methods across generation temperatures on Contiguous text. \method{} has higher power among selected passages and token coverage than Splitting in all six model--temperature groups, with similar purity and low empirical FDR for both methods. Recovery is weakest at the lowest temperature, especially for Qwen. The restart methods recover more watermarked content but exceed the FDR target and yield lower-purity reports. Temperature alone does not ensure independent watermark evidence.

\begin{table}[!htb]\centering
\caption{\textbf{Contiguous streams across generation temperatures, $\alpha=0.1$, $\eta=0.25$.} Each model--temperature group contains 100 original 4,096-token texts: ten prompts, five keys and two layouts varying region lengths or gaps. All methods use the same texts and the default detection settings, with one fission randomization per text. Power and ESR use their respective method-specific nonnull interval denominators, and token metrics follow Appendix~\ref{app:metric-conventions}.}\label{tab:temperatures}
\fontsize{7}{8.5}\selectfont\setlength{\tabcolsep}{0.45pt}\renewcommand{\arraystretch}{1.06}
\begin{tabular*}{\linewidth}{@{\extracolsep{\fill}}l@{\hspace{4pt}}*{5}{c}@{\hspace{4pt}}*{5}{c}@{\hspace{4pt}}*{5}{c}@{}}\toprule
 & \multicolumn{5}{c}{\textbf{Temperature 1.2}} & \multicolumn{5}{c}{\textbf{Temperature 1.5}} & \multicolumn{5}{c}{\textbf{Temperature 2.0}}\\
\cmidrule(lr){2-6}\cmidrule(lr){7-11}\cmidrule(lr){12-16}
Method & FDR & Power/ESR & Cov. & Cov.$^+$ & Purity & FDR & Power/ESR & Cov. & Cov.$^+$ & Purity & FDR & Power/ESR & Cov. & Cov.$^+$ & Purity\\\midrule
\multicolumn{16}{c}{\textit{Mistral-7B}}\\[2pt]
\textbf{FissionGuard} & 0.0003 & 0.679/-- & 0.698 & 0.698 & 0.960 & 0.0040 & 0.939/-- & 0.970 & 0.970 & 0.950 & 0.0049 & 0.961/-- & 0.977 & 0.976 & 0.948\\[1pt]
Splitting & 0.0000 & 0.450/-- & 0.478 & 0.478 & 0.979 & 0.0000 & 0.911/-- & 0.961 & 0.961 & 0.963 & 0.0004 & 0.924/-- & 0.967 & 0.967 & 0.964\\[1pt]
e-process & 0.1458 & --/1.000 & 1.000 & 0.950 & 0.620 & 0.1564 & --/1.000 & 1.000 & 0.985 & 0.621 & 0.1543 & --/1.000 & 1.000 & 0.989 & 0.622\\[1pt]
Raw-p & 0.1081 & --/1.000 & 0.949 & 0.940 & 0.828 & 0.1760 & --/1.000 & 0.997 & 0.983 & 0.826 & 0.1854 & --/1.000 & 0.999 & 0.985 & 0.821\\[1pt]
\addlinespace[3pt]
\multicolumn{16}{c}{\textit{Qwen3-8B}}\\[2pt]
\textbf{FissionGuard} & 0.0008 & 0.184/-- & 0.071 & 0.071 & 0.957 & 0.0022 & 0.837/-- & 0.742 & 0.742 & 0.957 & 0.0065 & 0.971/-- & 0.978 & 0.977 & 0.947\\[1pt]
Splitting & 0.0000 & 0.105/-- & 0.041 & 0.041 & 0.970 & 0.0003 & 0.788/-- & 0.712 & 0.712 & 0.972 & 0.0000 & 0.940/-- & 0.966 & 0.966 & 0.966\\[1pt]
e-process & 0.1396 & --/0.966 & 0.653 & 0.621 & 0.618 & 0.1469 & --/1.000 & 0.986 & 0.956 & 0.623 & 0.1572 & --/1.000 & 1.000 & 0.989 & 0.619\\[1pt]
Raw-p & 0.1815 & --/0.999 & 0.381 & 0.377 & 0.742 & 0.1632 & --/1.000 & 0.856 & 0.845 & 0.824 & 0.1883 & --/1.000 & 0.998 & 0.985 & 0.817\\[1pt]
\bottomrule\end{tabular*}
\end{table}

\section{Extension: distribution-free online monitoring}
\label{app:industrial-extension}
This appendix studies the presence-testing special case at $\eta=0$ for industrial measurements. Conformal ranks provide uniform evidence without a Gaussian sensor model. The controlled industrial replay satisfies the joint distribution of the $p$-values in~\eqref{eq:unwatermarked-input-law}, allowing the same fission and testing argument to apply. The experiment tests for departures from nominal operation at $\eta=0$, rather than a positive proportion threshold.

\subsection{Online conformal values under the null}
\label{app:online-conformal}
Suppose $X_1,X_2,\ldots$ are exchangeable under $H_0$. At time $t$, a permutation-equivariant nonconformity rule fitted to $(X_1,\ldots,X_t)$ assigns $\operatorname{score}_t(X_i)$ to each observation, with larger scores more unusual. With independent $\xi_t\sim\Unif(0,1)$, define the smoothed online conformal value
\begin{equation}
 P_t=\frac{\sum_{i=1}^t\1\{\operatorname{score}_t(X_i)>\operatorname{score}_t(X_t)\}+\xi_t\sum_{i=1}^t\1\{\operatorname{score}_t(X_i)=\operatorname{score}_t(X_t)\}}{t}.
 \label{eq:online-conformal-rank}
\end{equation}
The online conformal validity theorem \citep{vovk2003exchangeability,vovk2005algorithmic,shafer2008tutorial} gives independent uniform values under this homogeneous null:
\begin{equation}
 \PP(P_1\le u_1,\ldots,P_T\le u_T)=\prod_{t=1}^T u_t,
 \qquad (u_1,\ldots,u_T)\in[0,1]^T.
 \label{eq:conformal-iid}
\end{equation}
Permutation symmetry gives uniform reverse-time ranks conditional on unordered observations and later ranks, and independent smoothing handles ties. Iteration proves the product law. Reusing one split-conformal calibration sample does not generally provide this independence.

Exchangeability under the homogeneous null alone does not establish independence between null and nonnull $p$-values in streams with departures. The following replay draws each target and its reference blocks independently, conditional on fixed empirical distributions. Its validity concerns these controlled draws, not the sensor observations in their original time order.

\subsection{Industrial monitoring and results}
\label{app:industrial-workflow}
We evaluate controlled replays of HAI 21.03 \citep{hai2021dataset}. Each observation is an intact 16-second block of 79 sensor channels. A differenced-VAR scorer fitted on train1 is frozen. The block score averages its 16 residual scores. Complete blocks retain their internal sample order and source identifiers, while incomplete, mixed-label, or gap-crossing blocks are excluded when defining the pools. Test1 supplies development pools. Each of test2--5 supplies separate empirical laws $G_0$ (ordinary blocks) and $G_a$ (attack blocks). Define departures $G_1=(G_0+G_a)/2$ and $G_2=G_a$.

At each replay index, independently draw a target block $X_t$ from its scheduled law, 63 references $X_{t,1},\ldots,X_{t,63}$ from $G_0$, and an independent uniform $\xi_t$. For the frozen block score $a$, use
\begin{equation}
 P_t=\frac{\sum_{i=1}^{63}\1\{a(X_{t,i})>a(X_t)\}+\xi_t\bigl(1+\sum_{i=1}^{63}\1\{a(X_{t,i})=a(X_t)\}\bigr)}{64}.
 \label{eq:industrial-rank}
\end{equation}
Under $G_0$, exchangeability of the 64 scores gives an exact uniform rank conditional on the pools. Independent sampling of each target and its references makes null $p$-values independent of one another and of nonnull $p$-values, satisfying the joint distribution of the $p$-values in~\eqref{eq:unwatermarked-input-law}. Both target and references must follow the same nominal empirical law. Resampling with replacement preserves draw independence but creates no new physical measurements.

Each measurement block yields a score from the fixed scoring rule and a randomized rank, and data fission then supplies alarm and validation values for localization, testing and online FDR control. The industrial implementation uses a BJ-sum alarm with windows $\{4,8,16,32\}$ and boundary 4. The selected interval is the suffix that triggers the alarm, fixed before applying a Stouffer combination test to its reserved validation values at a level assigned by SAFFRON, with $\alpha=0.1$, $\rho=0.5$, $w_0=0.05$, and $\lambda=0.5$. Splitting scans the same window grid on its alarm half and confirms with its validation half, retaining original block indices, while Raw-BJ restart directly reports each unsplit alarm. This industrial comparator is distinct from e-process restart for watermark detection.

Each primary replay has 2,048 blocks, with $G_1$ on $[128,160)$, $G_2$ on $[320,352)$, and $G_0$ elsewhere (zero-based). Each of four source pools contributes 300 mixed deployments, giving 1,200 deployments and 2,400 departure regimes. The primary horizon was fixed before evaluation. The same uninterrupted executions also retain the 512-, 1,024-, and 4,096-block exposure checks in the source records. The interval null is that every target position in the reported interval follows $G_0$. A draw from the ordinary component of $G_1$ is still part of a nonnull regime. Evaluation never relabels it using the realized source block.

\begin{figure}[!htb]
\centering\includegraphics[width=\linewidth]{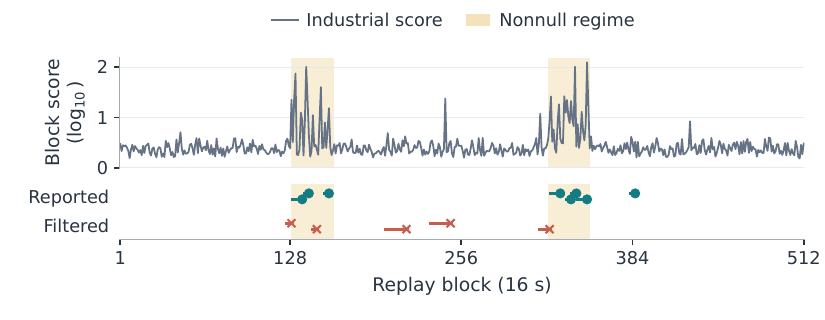}
\caption{\textbf{Industrial post-alarm screening.} First 512 blocks of the first 2,048-block test2 replicate, with original scores and all 13 candidate intervals in the crop. Shading marks the two departure regimes. Eight candidates are reported and five filtered, including two entirely null filters, but one false report remains. This example illustrates the procedure and does not estimate FDR.}
\label{fig:industrial-trace}
\end{figure}
\begin{table}[!htb]\centering\small\setlength{\tabcolsep}{5pt}
\caption{\textbf{Industrial monitoring over 2,048 blocks.} Mean FDP, mean false and true reports, pooled regime recall, and mean delay among detected regimes (replay seconds). All 1,200 mixed deployments enter FDR.}\label{tab:industrial-primary}
\begin{tabularx}{\linewidth}{@{}Yrrrrr@{}}\toprule
Method & FDR & Mean false & Mean true & Regime recall & Delay (s)\\\midrule
FissionGuard & 0.0484 & 0.40 & 6.53 & 0.831 & 167.7\\
Splitting & 0.0524 & 0.42 & 6.18 & 0.782 & 172.5\\
Raw-BJ restart & 0.3495 & 7.30 & 13.05 & 1.000 & 67.2\\
\bottomrule\end{tabularx}\end{table}
\begin{figure}[!htb]
\centering\includegraphics[width=\linewidth]{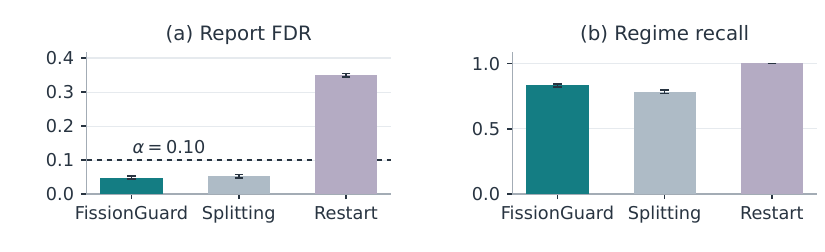}
\caption{\textbf{Industrial error control and recovery.} The 2,048-block comparison on 1,200 deployments: mean FDP and pooled regime recall at $\eta=0$, with 95\% paired within-pool bootstrap intervals. The dashed line marks $\alpha=0.1$. Restart denotes Raw-BJ restart without multiplicity correction.}
\label{fig:industrial-comparison}
\end{figure}
\enlargethispage{\baselineskip}
\method{} has lower empirical FDR than Raw-BJ restart and higher regime recall than Splitting (Table~\ref{tab:industrial-primary}). Confirmation filters ordinary activity while missing some departures and adding delay. Regime recall requires an intersecting report delivered during a departure, counting each regime once, and delay averages $16(t-s+1)$ replay seconds over detected regimes only. All reports enter FDP, including repeated true reports. The 4,000 paired within-pool bootstrap resamples quantify uncertainty conditional on these empirical pools. They do not represent independent industrial sites.

\end{document}